\newtheorem{theorem}{\textbf{Theorem}} 
\newtheorem{lemma}{\textbf{Lemma}}
\newtheorem{definition}{\textbf{Definition}}
\begin{document}
\title{Minimum Length Scheduling for Multi-cell Full Duplex Wireless Powered Communication Networks
%\thanks{This work is supported by Scientific and Technological Research Council of Turkey Grant $\#$117E241.}
}
\author{Muhammad Shahid~Iqbal, Yalcin Sadi,~\IEEEmembership{Member,~IEEE}
        and~Sinem~Coleri,~\IEEEmembership{Member,~IEEE}% <-this % stops a space
\thanks{Muhammad Shahid~Iqbal and Sinem Coleri are with the department of Electrical and Electronics Engineering, Koc University, Istanbul, Turkey, email: $\lbrace$miqbal16, scoleri$\rbrace$@ku.edu.tr. Yalcin Sadi is with the department of Electrical and Electronics Engineering, Kadir Has University, Istanbul, Turkey, email: yalcin.sadi@khas.edu.tr.

This work is supported by Scientific and Technological Research Council of Turkey Grant $\#$117E241.

%This work is partially presented in AdHocNow 2019.
}}% <-this % stops a space
\maketitle
\begin{abstract}
Wireless powered communication networks (WPCNs) are crucial in achieving perpetual lifetime for future wireless communication networks. Practical WPCNs cover large area and large number of sensors, necessitating multi-cell deployment. In this paper, we consider a multi-cell full-duplex WPCN, in which multiple hybrid access points (HAPs) transfer energy to a set of users that concurrently transmit information to their respective HAP. We investigate a novel minimum length scheduling problem to determine the optimal power control, and scheduling for constant and continuous rate models, while considering concurrent transmission of users, energy causality, maximum transmit power and traffic demand constraints. The formulated optimization problems are shown to be non-convex and combinatorial in nature, thus, difficult to solve for the global optimum. As a solution strategy, first, we propose optimal polynomial time algorithms for the power control problem considering constant and continuous rate models based on the evaluation of Perron-Frobenius conditions and usage of bisection method, respectively. Then, the proposed optimal power control solutions are used to determine the optimal transmission time for a subset of users that will be scheduled by the scheduling algorithms. For the constant rate scheduling problem, we propose a heuristic algorithm that aims to maximize the number of concurrently transmitting users by maximizing the allowable interference on each user without violating the signal-to-noise-ratio (SNR) requirements. For the continuous rate scheduling problem, we define a penalty function representing the advantage of concurrent transmission over individual transmission of the users. Following the optimality analysis of the penalty metric and demonstration of the equivalence between schedule length minimization and minimization of the sum of penalties, we propose a heuristic algorithm based on the allocation of the concurrently transmitting users with the goal of minimizing the sum penalties over the schedule. Through extensive simulations, we demonstrate that the proposed algorithm outperforms the successive transmission and concurrent transmission of randomly selected users for different HAP transmit powers, network densities and network size. 
\end{abstract}
\begin{IEEEkeywords}
Energy harvesting, wireless powered communication networks, full duplex networks, power control, scheduling, multi-cell network.
\end{IEEEkeywords}
\IEEEpeerreviewmaketitle
%\begin{figure}
%\centering
%\currentpage
%\oddpagelayouttrue
%\pagedesign
%\caption{page layout for this document}
%\end{figure}
\section{Introduction} \label{sec:intro}
Thanks to recent advances and standardization in communication technology for Internet of things (IoT) and machine type communication (MTC), including Bluetooth $5$, WiFi Halow, narrow band IoT and wakeup radios, wireless sensor networks are now part of our daily life \cite{MTC-IoTstandards}. According to the recent Ericsson mobility report, $24.6$ billion sensor nodes are expected to be installed by $2025$ \cite{Ericsson-report}. Majority of the applications in real time monitoring and industrial automation have strict delay requirements \cite{LowPowerTransceivers}. Therefore, increasing the lifetime of this massive number of sensor nodes while satisfying their latency requirements is a major challenge for 5G wireless networks. Low power transceivers with energy harvesting capability and intelligent medium access protocols have the potential to fulfill these diverse requirements. 
% Considering the benefits of such networks, $25$ billion sensor nodes are expected to be installed by $2021$ \cite{billion_iot}.

For perpetual energy, sensor nodes can harvest energy from the environment by using natural sources, i.e., wind, solar and vibration; magnetic coupling, i.e., inductive coupling and resonant magnetic coupling; and radio frequency (RF) sources \cite{harvest_10}. The dependence of natural sources on the environmental conditions results in an irregular and unpredictable amount of harvested energy, which is not acceptable for time critical applications. On the other hand, the usage of magnetic coupling is very restricted due to the large hardware size, short energy transfer range, and requirement for the alignment of energy transmitting and receiving coils. Due to full control on the energy transfer, small form factor, long range and no alignment requirements, the RF energy harvesting (EH) is a promising technique for the wireless energy transfer for IoT and MTC type applications. Based on the operational modes, RF-EH is classified into Simultaneous Wireless Power and Information Transfer (SWIPT) and Wireless Powered Communication Network (WPCN). 

SWIPT enables the sensor nodes to continuously decode the information and recharge their batteries by using the same signal\cite{SWIPT_mimo}. Generally, it is difficult to achieve the desired EH efficiency and information transfer capacity from the same signal. To achieve this trade-off, SWIPT is studied for different architectures, including power splitting, which divides the received signal into two streams for EH and information decoding (ID) with a pre-determined ratio; time switching, which uses the same antenna for EH and ID at different time intervals; and antenna separation, which uses different antennas for EH and ID \cite{SWIPT-Survey}. For a single user SWIPT, the trade-off between EH efficiency and information transfer capacity is studied for flat fading channel \cite{harvest_61}, additive white Gaussian noise channel \cite{harvest_06} and non-linear EH model \cite{harvest_new63}. The authors in \cite{SWIPT_multi-HAP-BatteryRechargeTime} study the battery recharge time of a single user with multiple energy transmitters. On the other hand, \cite{harvest_59} and \cite{harvest_19} consider multi-user SWIPT system to minimize transmit power and maximize energy transfer, respectively. For more detailed study on the SWIPT, we refer the reader to study recent survey papers presented in \cite{SWIPT-survey1,SWIPT-survey2,SWIPT-survey3}.

In WPCN, a hybrid access point (HAP) is responsible for energy transmission in the downlink and users harvest this energy to transmit back their information to the HAP by using the harvested energy. The authors in \cite{harvest_07} introduces the first protocol for the sum throughput maximization of a WPCN, in which the frame length is divided into two non-overlapping phases dedicated for energy and information transfer. This sum throughput objective results in an unfair resource allocation, motivating the researchers to incorporate different objective functions, such as weighted sum throughput maximization \cite{harvest_55}, total effective throughput maximization \cite{effective_throughput} and minimum throughput maximization \cite{harvest_04}. In this half-duplex model, as energy and information take place in different phases, the information transmission order of the users is not important, thus, no scheduling algorithm is needed. Since the energy harvesting rate of these networks is very low, the usage of multiple energy transmitters is investigated in \cite{WPCN-multiHAP1,WPCN-multiHAP2,WPCN-multiHAP3}. Particularly, \cite{WPCN-multiHAP1} studies the time allocation and load balancing for the throughput maximization, \cite{WPCN-multiHAP2} provides the implementation of a test-bed to maximize the amount of harvested energy and \cite{WPCN-multiHAP3} investigates a wireless body area network with the objective of maximizing the sum throughput. Following the usage of multiple energy transmitters to increase the energy harvesting rate, multi-cell WPCN, in which multiple HAPs are used for both energy transmission and information reception, has been proposed with the goal of increasing network coverage. Different objective functions have been studied, including sum throughput maximization and fair throughput optimization for non-orthogonal multiple access \cite{WPCN-multiHAP5_maxThroughputMin}, minimum throughput maximization \cite{WPCN-multiHAP4_minThroughputMax} and throughput maximization through beam-forming \cite{WPCN-multiHAP6_beamforming}. However, these objectives fail to provide any delay guarantee, which is necessary for the time critical networks. Only \cite{Elif_PIMRC} considers the minimum length scheduling problem. The usage of separate time intervals for energy and information transmission in half-duplex multi-cell WPCN leads to simpler and sub-optimal problem formulations. 

In the last few years, WPCN networks are extended for full-duplex by performing energy transmission and information reception at the same time and frequency to utilize the spectrum and time more efficiently. Thanks to the recent advances in self interference cancellation techniques \cite{harvest_41_ref26, harvest_41_ref27} and the related practical implementations \cite{harvest_30_ref19, harvest_30_ref26}, full-duplex has become a major transceiving technique for 5G and beyond networks \cite{harvest_new66}. In the context of single-HAP full duplex, the WPCN is mainly studied for throughput maximization by considering the whole network \cite{harvest_new68} or only the HAP \cite{harvest_30, harvest_40, harvest_50} operating in full duplex mode with perfect self interference cancellation. The residual self interference is considered in \cite{harvest_40, harvest_41, harvest_50,Shahid_ICC,DRSTMP_ITL}. Only in \cite{harvest_30,MLSP,onoff3,DRMLSP_AdhocNow}, the authors have considered the single HAP total transmission time minimization under a given traffic demand of the users for continuous rate without and with scheduling, constant and discrete transmission rate models, respectively. To the best of authors knowledge, the multi-HAP full-duplex WPCN is still not studied in the literature. 

The goal of this paper is to determine the optimal power control and scheduling for the constant and continuous transmission rate model with the objective of minimizing the schedule length for a realistic multi-cell full-duplex WPCN system. In full-duplex, all the users can harvest energy during the information transmission of other users. Therefore, the users with lower energy level can be scheduled later so that they harvest more energy and can transmit their information in shorter time. Moreover, in multi-cell WPCN, the users transmitting information simultaneously interfere with each other, which necessitates an intelligent grouping mechanism to minimize the interference among the concurrently transmitting users. The original contributions of this paper are listed below:

\begin{itemize}
\item We propose an optimization framework for the minimization of the schedule length in a full-duplex multi-cell WPCN in which multiple HAPs transmit energy to the users, and by using the harvested energy, multiple users transmit information to their respective HAP within a time slot, for the first time in the literature. The framework considers the concurrent transmissions of the users, and incorporates the non linear energy harvesting model for a full-duplex WPCN, in which the HAPs and users both operate in a full-duplex mode.  
\item We formulate two mixed integer non-linear optimization problems for the constant and continuous rate model to minimize the schedule length. The formulated optimization problems are non-convex and generally hard to solve for the global optimal solution. As a solution strategy, we perform decomposition of the optimization problems into power control problems and scheduling problems. First, we solve the power control problems for a given set of concurrently transmitting users optimally, and then, the optimal power control solutions are used to determine the optimal transmission time for a subset of users to be scheduled by the scheduling algorithms.
\item For the constant rate power control problem, we propose an optimal polynomial time algorithm based on the evaluation of the Perron-Frobenius conditions.
\item For the continuous rate power control problem, we propose an optimal polynomial-time algorithm based on the usage of bisection method and evaluation of the Perron-Frobenius conditions together.
\item For the constant rate scheduling problem, we propose a heuristic algorithm based on the maximization of the number of concurrently transmitting users within a transmission slot by maximizing the allowable interference on each user without violating their signal-to-noise-ratio (SNR) requirements. 
\item For continuous rate scheduling problem, we define a penalty function as a metric representing the advantage of the concurrent transmission of a set of users over their individual transmission. Through the analysis of the optimality conditions, we prove the correspondence between  the schedule length minimization and the minimization of the sum of penalties. Then, we propose a heuristic algorithm that allocates the users concurrently to minimize the sum penalties over the schedule. 
\item We evaluate the performance of the proposed algorithms in comparison to the successive transmission and concurrent transmission of randomly selected users for different HAP transmit powers, network densities and network size. 
\end{itemize}

The rest of the paper is organized as follows. Section \ref{sec:system} describes the multi-HAP in-band full-duplex WPCN system model and assumptions used throughout the paper. Section \ref{sec:mlsp} presents the mathematical formulation of the minimum length scheduling problem for constant and continuous transmission rate models. Section \ref{sec:PCP} provides the problem formulations and their solutions for the power control problem for constant and continuous transmission rate models. The scheduling algorithms for constant and continuous rate models are given in Section \ref{sec:scheduling}. Section \ref{sec:simulations} provides the performance evaluation of the proposed scheduling schemes in comparison to successive transmissions and concurrent transmission of a randomly selected set of users. Finally, the concluding remarks are given in Section \ref{sec:conclusion}.

\section{System Model and Assumptions} \label{sec:system}

The system model and related assumptions are described as follows:

 \begin{figure}[t]
 \centering
\includegraphics[width= 0.6 \linewidth]{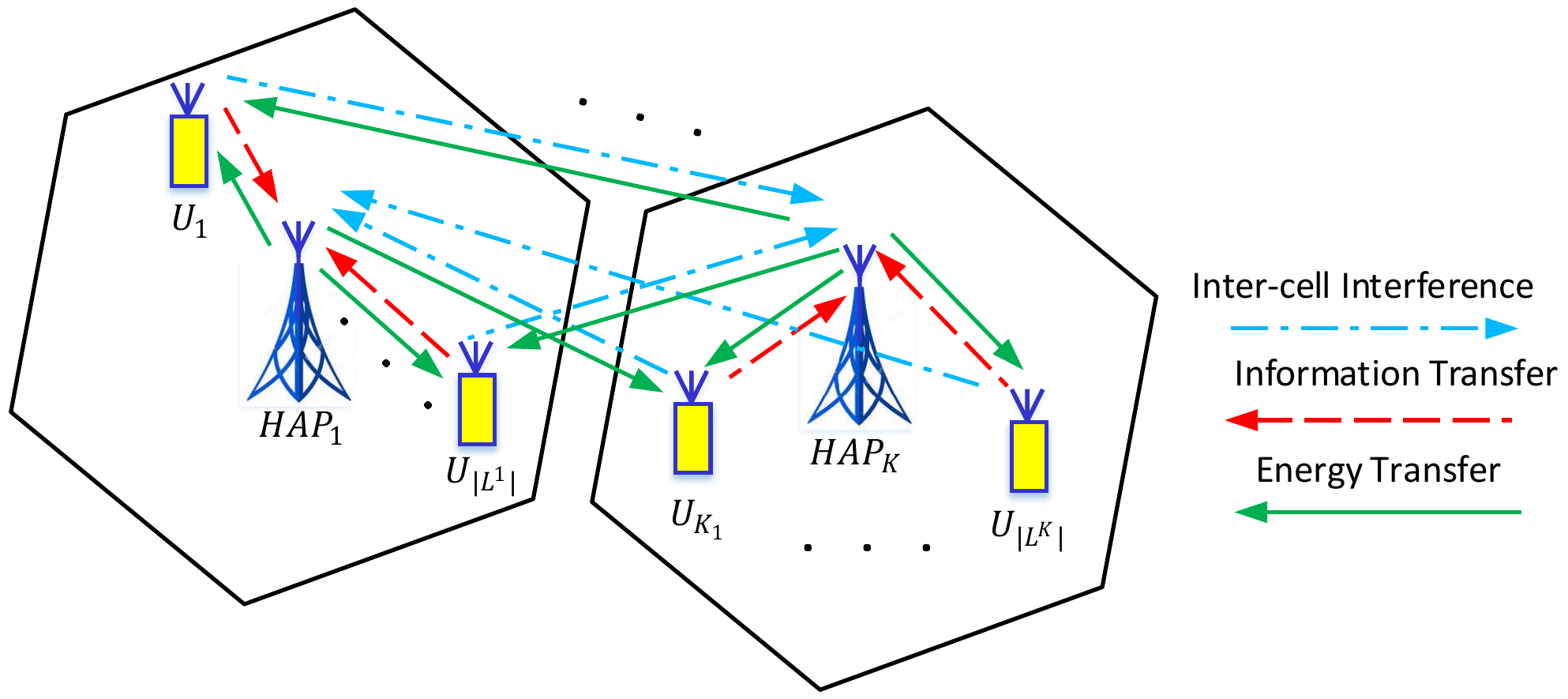}
\caption{Multi-cell wireless powered communication network architecture} \label{Figure:1}
\end{figure}
\begin{itemize}
\item We consider a multi-cell WPCN, which consists of $K$ HAPs and $N$ users, as depicted in Fig. \ref{Figure:1}. All the HAPs and users are equipped with a full-duplex antenna. Full-duplex antennas are used for concurrent energy transmission/reception and information reception/transmission. The HAPs are connected to a stable power line and transmit at a constant power $P_{h}$ continuously. On the other hand, users do not have any external power supply and can only harvest energy from the HAPs. The harvested energy is stored in a rechargeable battery with initial level $B_{n}$ for user $n$, where $n\in \{1,\cdots, N\}$, and capacity $B_{max}$. The users can harvest energy from all the HAPs but only transmit information to a single HAP to which they are connected to. To simplify the problem and focus on the power control and scheduling in the first step of the study, we consider a fixed topology, i.e., the users know the HAP they are connected to. The extension to the cell load balancing, in which the users are allocated evenly to all the cells, is subject to future work. The set of users connected to HAP $k$ is denoted by $L^k$ for $k\in\{1, \cdots, K\}$ such that $\bigcup_{k=1}^{K} L^{k}=N$. 
\item We assume that one HAP acts as central HAP. The central HAP is responsible to execute the algorithm for the joint power control, rate adaptation and scheduling based on the energy, data and channel characteristics of the users in each cell. Then, the central HAP communicates this information to the other HAPs, which supervise the scheduling and synchronization of all the users within their coverage area. 
\item We consider time division multiple access (TDMA) protocol within each cell for the uplink information transmission. The total time in which system remains operational is partitioned into frames and each frame is further divided into $M$ variable length non-overlapping time slots. In each time slot, a subset of users from different cells transmit their information simultaneously to their respective HAP. No intra-cell interference exists except the self interference at the HAP due to full duplex operational mode. However, simultaneous transmitting users from different cells may create interference to each other. %Let us assume that $n=\max\{\vert L^1\vert,\vert L^2\vert, \cdots, \vert L^K\vert\}$.
\item The uplink information transmission and downlink energy transfer channels are assumed to be different. The uplink channel gain from user $n$ to HAP $k$ is denoted by $g_{nk}$ and the downlink channel gain from HAP $k$ to user $n$ is denoted by $h_{nk}$. We assume that all the channels are quasi-static, i.e., the channel gains remain the same in the current frame and can vary independently in the next frame. This block fading channel assumption is commonly used in the WPCN formulations \cite{WPCN-multiHAP4_minThroughputMax,harvest_19, harvest_04, harvest_39_ref17, harvest_39, harvest_51, harvest_56}. We further assume that the channel state information is perfectly known at the HAP \cite{harvest_07,harvest_04,harvest_50, harvest_44, harvest_55,harvest_51,harvest_30,harvest_40}. 
\item We assume that user $n$ has a traffic demand $D_n$ bits to be transmitted in each frame.
\item In the constant transmission rate model, we assume that each user allocated to the transmission slot $m$ transmits information at constant rate $r$ if the SNR of user $n$ is above a fixed threshold $\gamma_n$ as
\begin{equation}
\frac{P_{n}g_{nn}}{N_oW+W\sum_{j\neq n}P_{j}g_{jn}+\beta P_{h}} \geq \gamma_n, \hspace*{0.2cm} n \in \{1, \cdots, N\}
\end{equation}
where $P_{n}$ is the transmit power of user $n$, $W$ is the channel bandwidth, $N_o$ is the noise density, $\beta P_h$ represents the interference created by the energy radiation by the HAPs including self-interference and interference from the other HAPs and the term $W\sum_{j\neq n}P_{j}g_{jn}$ is the interference at the HAP to which user $n$ is connected from other users that are concurrently transmitting information in slot $m$.
\item In the continuous rate model \cite{MLSP, harvest_new62,harvest_07,harvest_30,harvest_50}, the Shannon channel capacity formulation for additive white Gaussian noise (AWGN) channel is used to calculate the maximum achievable data rate as a function of SNR by $x_{n}^{m}=Wlog_2(1+k_{n}^{m}P_{n})$, where $x_{n}^{m}$ denotes the transmission rate of user $n$ during the transmission time slot $m$, and $k_{n}^{m}$ is defined as $g_{nn}/(N_oW+W\sum_{j\neq n}P_{j}g_{jn}+\beta P_{h})$.
\item We assume a continuous power model in which user $n$ can transmit at any feasible power below a maximum power limit $P_{max}$, which is imposed to reduce the interference to other users.  
\item We consider a realistic non-linear energy harvesting model \cite{NLEH_model_01,NLEH_model_02,NLEH_01}, which uses a logistic function to incorporate the non-linear behavior of the energy harvesting circuitry and performs close to the experimental results presented in \cite{NLEH_Prac_01,NLEH_Prac_02,NLEH_Prac_03}. The total power received by user $n$ is $P_{n}^{T}=\sum_{k=1}^{K} h_{nk}P_h$. In this model, the energy harvesting rate for user $n$ is given by: 
\begin{equation}
C_n=\dfrac{P_s[\Psi_n-\Omega_n]}{1-\Omega_n},
\end{equation}
where $\Omega_n=(1+e^{a_nb_n})^{-1}$, is a constant to guarantee zero-input zero-output response; $P_s$ is the maximum harvested power during saturation; and $\Psi_n=(1+e^{-a_n(P_n^T-b_n)})^{-1}$ is the logistic function related to user $n$.
$a$ and $b$ are the positive constants related to the non-linear charging rate with respect to the input power and turn-on threshold, respectively. For a given energy harvesting circuit, the parameters $P_s$, $a$ and $b$ are determined by curve fitting.
\end{itemize}

\section{Minimum Length Scheduling Problem Formulation} \label{sec:mlsp}
In this section, we introduce the multi-cell minimum length scheduling problem for the constant and continuous transmission      rate models. We first present the mathematical formulations for both transmission rate models as mixed integer non-linear optimization problems. Then, we propose the solution strategy followed in the subsequent sections.

\subsection{Mathematical Formulation}
The joint optimization of the power control and scheduling for concurrently transmitting users with the objective of minimizing the schedule length with the traffic demand, energy causality and user transmit power constraints is formulated for constant and continuous transmission rate models.

\subsubsection{Constant Transmission Rate Model}
The formulation of the constant transmission rate model is as follows:

$\cal{MC-MLSP}$:
\begin{subequations} \label{opt_problem}
\allowdisplaybreaks
\small
\begin{align}
& \textit{minimize}
& & \sum_{m=1}^{M}\tau^m \label{obj}\\
& \textit{subject to}
& & \sum_{m=1}^{M}z_n^m=1, \hspace*{0.2cm} n \in \{1, \cdots, N\} \label{user_slot_assignment}\\
&&& \sum_{n\in L^k}z_n^m \leq 1, \hspace*{0.2cm} m \in \{1, \cdots, M\}, k \in \{1, \cdots, K\},  \label{user_HAP_assignment}\\
&&& x_n^m \tau^m \geq D_n \hspace*{0.2cm} n \in \{1, \cdots, N\}, \hspace*{0.2cm} m \in \{1, \cdots, M\}, \label{trafficDemand}\\
&&& x_n^m=z_{n}^{m}r \hspace*{0.2cm} n \in \{1, \cdots, N\}, \hspace*{0.2cm} m \in \{1, \cdots, M\}, \label{fixRate}\\
&&& \frac{P_{n}g_{nn}}{N_oW+W\sum_{j\neq n}P_{j}g_{jn}+\beta P_{h}} \geq \gamma_n, \hspace*{0.2cm} n \in \{1, \cdots, N\},\label{constantRate}\\
&&& B_{n}+\sum_{j=1}^{m} C_{n}\tau^{j}-P_{n}\tau^{m}\geq 0,  \hspace*{0.2cm} n \in \{1, \cdots, N\}, \label{energy_causality}\\
&&& P_{n}\leq z_n^m P_{max}, \hspace*{0.2cm} n \in \{1, \cdots, N\},\hspace*{0.2cm} m \in \{1, \cdots, M\}, \label{pmax} \\
& \textit{variables}
& & P_{n} \geq 0, \hspace*{0.1cm} \tau^m\geq 0, \hspace*{0.1cm} z_{n}^{m} \in \{0,1\}, \hspace*{0.1cm} n \in \{1, \cdots, N\}, \hspace*{0.1cm} m \in \{1, \cdots, M\}.\label{mlsp_vars}
\end{align}
\end{subequations}
%&&&  \label{datarate}\\
The variables of the problem are $P_{n}$, the transmit power of user $n$; $\tau^m$, the length of transmission slot $m$; and $z_{n}^{m}$, a binary variable which takes a value $1$ if user $n$ is allocated to slot $m$ and $0$ otherwise. %In addition, $\tau_0$ denotes an initial waiting time duration during which all the users only harvest energy without transmitting any information.

The objective of the optimization problem is to minimize the schedule length as given by Eq. (\ref{obj}). Eq. (\ref{user_slot_assignment}) states that each user should be allocated to only one time slot. Eq. (\ref{user_HAP_assignment}) represents that at most one user can be allocated to a slot for the same HAP. Eqns. (\ref{trafficDemand}) and (\ref{fixRate}) together represent the traffic demand of users. Eq. (\ref{constantRate}) represents the condition for SNR to achieve a constant rate. Eq. (\ref{energy_causality}) gives the energy causality constraint: The total amount of available energy, including both the initial energy and the energy harvested until and during the transmission of a user, should be greater than or equal to the energy consumed during its transmission. Eq. (\ref{pmax}) represents the maximum transmit power constraint for the users.

This optimization problem is a Mixed Integer Non-Linear Programming (MINLP) problem, thus, difficult to solve for the global optimum \cite{Boyd}.

\subsubsection{Continuous transmission rate model} 
The formulation of the continuous transmission rate model is very similar to the constant transmission rate model except Eqns. (\ref{trafficDemand}), (\ref{fixRate}) and (\ref{constantRate}) are replaced by the following constraint:
\begin{equation}
\sum_{m=1}^{M}z_{n}^{m}W \tau^m log_2\bigg(1+\frac{P_{n}g_{nn}}{N_oW+W\sum_{j\neq n}P_{j}g_{jn}+\beta P_{h}}\bigg)\geq D_n,  \hspace*{0.2cm} n \in \{1, \cdots, N\}
\end{equation}

The continuous rate problem is a MINLP problem, which is difficult to solve for the global optimal solution \cite{Boyd}.

\subsection{Solution Framework}
As both optimization problems are MINLP problems, finding the global optimal solution requires exponential time algorithms, which are intractable even for medium size networks. To overcome this intractability, we decompose the optimization problem into optimal power control problem and scheduling problem as detailed below:
\begin{itemize}

\item For a given set of concurrently transmitting users, we formulate the optimization problem to determine the minimum transmission slot length and the corresponding transmit power vector while considering the maximum transmit power constraint, traffic demand and energy causality of the users. We first show that the constant rate power control problem is a feasibility problem, and then, by using the Perron-Frobenius condition, we find the optimal power vector if the problem is feasible. For the continuous transmission rate problem, we first show that the transmission slot length should be equal for all the users, this reduces the problem to finding the power vector and the length of the transmission slot. We use the Perron-Frobenius conditions and bisection method together to find the power vector and transmission slot length.
\item Determining the power vector and transmission slot length for a given set of users reduces the problem to the optimization of the concurrently transmitting users within a transmission slot. For the constant rate scheduling problem, we exploit the affordable interference levels of all the users and group them based on these interference levels while considering the maximum transmit power and energy causality of the users. For the continuous rate scheduling problem, we introduce a penalty function for concurrently transmitting users as the difference between the concurrent transmission time and the sum of individual minimum transmission times. Then, based on the optimality conditions derived by using the penalty function and demonstration of the equivalence between total transmission time minimization and minimization of the sum of penalties, we propose a heuristic algorithm that intelligently allocates the users for concurrent transmission.  
\end{itemize}

\section{Power Control Problem}\label{sec:PCP}
In this section, we determine the optimal power control, rate allocation and time slot length for a given set of concurrently transmitting users from different cells with the objective of minimizing the time slot length for constant and continuous transmission rate models. For simplification, we remove the superscript $m$ in Section \ref{sec:mlsp}, which represents the time slot index.

\subsection{Constant Rate Transmission Model} \label{Sec:ConstantRateModel}
The optimal power control problem for constant transmission rate model is formulated as follows:
\begin{subequations} \label{pcp_problem}
\allowdisplaybreaks
\small
\begin{align}
& \textit{minimize}
& & t \label{pcp-obj1}\\
& \textit{subject to}
& & \frac{D_n}{x_n} \leq t, \hspace*{0.2cm} n \in \mathbf{S}, \label{traffic-Demand1}\\
&&& B_{n}+ C_{n}t-P_{n}t\geq 0,  \hspace*{0.2cm} n \in \mathbf{S}, \label{pcp-Ecausality1}\\
&&& P_{n}\leq P_{max}, \hspace*{0.2cm} n \in \mathbf{S}, \label{pcp-pmax1} \\
&&& x_n=r, \hspace*{0.2cm} n \in \mathbf{S}, \label{pcp-rate1} \\
&&& \frac{P_{n}g_{nn}}{N_oW+W\sum_{j\neq n}P_{j}g_{jn}+\beta P_{h}} \geq \gamma_n, \hspace*{0.2cm} n \in \mathbf{S} \label{pcp-SNR1}\\
& \textit{variables}
& & P_{n} \geq 0, \hspace*{0.1cm} t> 0, n \in \mathbf{S}.\label{pcp_vars1}
\end{align}
\end{subequations}
The objective of this problem is to determine the optimal time slot length for a given set of concurrently transmitting users in $\mathbf{S}$. The constraints in Eqns. (\ref{traffic-Demand1})-(\ref{pcp-SNR1}) represent the traffic demand, energy causality, maximum transmit power and constant transmission rate constraint, respectively. Note that $B_n$ refers to the energy available at the start of the transmission slot and we used the same notation for simplicity.   

This optimization problem can be reduced to a feasibility problem. If a feasible power vector exists, then, the optimal time slot length is $t=\max \limits_{n\in \mathbf{S}} D_n/r$. In the following, we present the feasibility of the concurrent transmission of the users in set $\mathbf{S}$. For the feasibility of concurrent transmission, there should exist a set of transmit power levels $P_n, \forall n\in \mathbf{S}$, such that the SNR constraint given in Eq. (\ref{pcp-SNR1}) is satisfied for each user in $\mathbf{S}$ without violating the maximum transmit power and energy causality constraints given in Eqns. (\ref{pcp-pmax1}) and (\ref{pcp-Ecausality1}). At any decision time $t_{dec}$, the existence of such a set of transmit power levels that satisfy the $P_{max}$ constraint can be determined by using Perron-Frobenius conditions described as follows: Let $\mathbf{G_\mathbf{S}}$ be a $\vert \mathbf{S} \vert \times \vert \mathbf{S} \vert$ relative channel gain matrix, in which $a_{ij}=g_{ji}/g_{ii}$, represents the entry of the $i^{th}$ row and $j^{th}$ column for $i \neq j$ and $a_{ij}=0$ for $i=j$. Let $\mathbf{D_{\mathbf{S}}}$ represent a $\vert \mathbf{S} \vert \times \vert \mathbf{S} \vert$ diagonal matrix with the $i^{th}$ entry equal to $\gamma_i$. Let $\mathbf{\sigma_{\mathbf{S}}}$ be a $\vert \mathbf{S} \vert \times 1$ normalized noise power vector with the $i^{th}$ entry equal to $\gamma_iWN_0/g_{ii}$. Then, Perron-Frobenius conditions states that there exists a power allocation vector that satisfies the $P_{max}$ constraint if and only if the largest real eigenvalue of $\mathbf{D_{\mathbf{S}}G_{\mathbf{S}}}$ is less than $1$ and every element of the component-wise minimum power vector $(\mathbf{I}-\mathbf{D_SG_S})^{-1}\mathbf{\sigma_S}$ is less than or equal to $P_{max}$, i.e., $(\mathbf{I}-\mathbf{D_SG_S})^{-1}\mathbf{\sigma_S}\preceq \mathbf{P_{max}}$, where $\mathbf{P_{max}}$ is a vector with all entries equal to $P_{max}$. For a feasible solution, the users must also satisfy the energy causality constraint, therefore, if the transmit powers also follow the energy causality constraint, the solution is feasible at the time of decision. 
\begin{lemma}
%If any component of $\mathbf{P^{min}}=(\mathbf{I}-\mathbf{D_SG_S})^{-1}\mathbf{\sigma_S}$ vector is greater than $P_{max}$ or if it violates the energy causality constraint for any user, then, Problem \ref{pcp_problem} is infeasible.
If $\mathbf{P^{min}}=(\mathbf{I}-\mathbf{D_SG_S})^{-1}\mathbf{\sigma_S}$ is an infeasible solution to Problem \ref{pcp_problem}, then, there is no feasible solution.

\end{lemma}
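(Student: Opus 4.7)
The plan is to leverage the component-wise minimality of $\mathbf{P^{min}}$, which has already been established in the Perron--Frobenius discussion preceding the lemma: $\mathbf{P^{min}}$ is the component-wise infimum of all nonnegative power vectors satisfying the SNR constraint (\ref{pcp-SNR1}). Since the remaining problem constraints (\ref{pcp-pmax1}) and (\ref{pcp-Ecausality1}) are component-wise upper bounds on each $P_n$, any infeasibility at $\mathbf{P^{min}}$ propagates to every power vector that dominates it, and every SNR-feasible vector does dominate it.

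Concretely I would argue as follows. Writing the SNR requirement (\ref{pcp-SNR1}) in vector form gives $(\mathbf{I}-\mathbf{D_SG_S})\mathbf{P}\succeq \mathbf{\sigma_S}$ with $\mathbf{P}\succeq \mathbf{0}$. The Perron--Frobenius condition guarantees that $(\mathbf{I}-\mathbf{D_SG_S})^{-1}$, when it exists, is entrywise nonnegative (it equals the convergent Neumann series $\sum_{k=0}^{\infty}(\mathbf{D_SG_S})^k$), so left-multiplying the SNR inequality by it preserves component-wise order and yields $\mathbf{P}\succeq \mathbf{P^{min}}$ for every SNR-feasible $\mathbf{P}$. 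If instead the spectral radius of $\mathbf{D_SG_S}$ is at least one, no nonnegative power vector satisfies the SNR inequality at all and the lemma is trivial. Now note that (\ref{pcp-pmax1}) is literally $P_n\leq P_{max}$, while (\ref{pcp-Ecausality1}) rearranges to $P_n\leq C_n+B_n/t$, and constraints (\ref{traffic-Demand1})--(\ref{pcp-rate1}) force $t\geq \max_{n\in\mathbf{S}}D_n/r$ independently of $\mathbf{P}$. If $\mathbf{P^{min}}$ violates (\ref{pcp-pmax1}) in some coordinate, every $\mathbf{P}\succeq \mathbf{P^{min}}$ violates the same coordinate; and if it violates (\ref{pcp-Ecausality1}) at $t=\max_n D_n/r$, it continues to violate it for every larger $t$, because raising $t$ only tightens the bound $C_n+B_n/t$. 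Either way no admissible pair $(\mathbf{P},t)$ exists, which is the contrapositive of the lemma.

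The hard part is the clean invocation of the Perron--Frobenius machinery, specifically the nonnegativity of $(\mathbf{I}-\mathbf{D_SG_S})^{-1}$ that makes left-multiplication component-wise monotone; this is exactly where the spectral radius assumption enters, and I would cite the standard M-matrix or Neumann series result to justify it. Once component-wise minimality of $\mathbf{P^{min}}$ within the SNR-feasible set is in hand, the remainder is a direct monotonicity argument that requires no further calculation.
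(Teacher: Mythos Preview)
Your proposal is correct and follows essentially the same approach as the paper: both arguments rest on the component-wise minimality of $\mathbf{P^{min}}$ among SNR-feasible power vectors and the observation that the remaining constraints (\ref{pcp-pmax1}) and (\ref{pcp-Ecausality1}) are monotone upper bounds on each $P_n$. Your version is more careful---you explicitly justify the minimality via the nonnegativity of $(\mathbf{I}-\mathbf{D_SG_S})^{-1}$ through the Neumann series, handle the spectral-radius-$\geq 1$ case, and track the dependence of the energy-causality bound on $t$---but these are refinements of the same idea rather than a different route.
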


\begin{proof}
%Let $\mathbf{P^{min}}=\{P_{1}^{min}, \cdots, P_{\vert \mathbf{S} \vert}^{min}\}$ denote the minimum power vector for a set $\mathbf{S}$ and let $P_{k}^{min}>P_{max}$ for any particular user $k \in \mathbf{S}$. Then, any other power vector $\mathbf{P^{'}}$ with  $P_{k}^{'}>P_{k}^{min}$ will still violate the $P_{max}$ constraint for the user $k$. Furthermore, for any user $j$, if the minimum power $P_{j}^{min}$ violates the energy causality constraint, i.e., $E_{j}^{required}=P_{j}^{min}D_j/r>E_{j}^{available}$, then, any higher power $P_{j}^{'}$ will require more energy, i.e., $P_{j}^{'}D_j/r> P_{j}^{min}D_j/r$, which is still a violation of the energy causality constraint, since the transmission rate is constant.
Let $\mathbf{P^{min}}=\{P_{1}^{min}, \cdots, P_{\vert \mathbf{S} \vert}^{min}\}$ denote the minimum power vector for a set $\mathbf{S}$. Any other power vector $\mathbf{P^{'}}$ satisfying the SNR constraint is component-wise greater than $\mathbf{P^{min}}$, i.e.,  $\mathbf{P^{'}}\succ \mathbf{P^{min}}$. Then, if $\mathbf{P^{min}}$ is infeasible, there exists either $P_{k}^{min}>P_{max}$ for any particular user $k \in \mathbf{S}$ or required energy $E_j$ is greater than the available energy, i.e., $E_{j}=P_{j}^{min}D_j/r>E_{j}^{available}$ for any particular use $j$. Then, if $\mathbf{P^{min}}$ is infeasible, there is no feasible solution for Problem \ref{pcp_problem}, since either $P_{k}^{'}$ will be greater than $P_{max}$ or $E_{j}^{'}$ will be still greater than $E_{j}^{available}$, making $\mathbf{P^{'}}$ also infeasible.
\end{proof}
The following theorem states that a feasible $\mathbf{P^{min}}$ vector is an optimal solution to the optimization problem  \ref{pcp_problem}.
\begin{theorem}
Let $\mathbf{P^{min}}=(\mathbf{I}-\mathbf{D_SG_S})^{-1}\mathbf{\sigma_S}$ denote the minimum feasible power vector obtained from Perron-Frobenius conditions. Then, if $\mathbf{P^{min}}$ is feasible, i.e., $\mathbf{P^{min}}\preceq \mathbf{P_{max}}$, and it satisfies the energy causality constraint of each user, then, $\mathbf{P^{min}}$ is an optimal solution to optimization problem \ref{pcp_problem}.
\end{theorem}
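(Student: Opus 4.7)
The plan is to prove optimality by first establishing a lower bound on the objective $t$ that depends only on the set $\mathbf{S}$ and not on the power vector, and then verifying that the candidate pair $(\mathbf{P^{min}}, t^*)$ with $t^* := \max_{n \in \mathbf{S}} D_n/r$ attains this bound while meeting every constraint. Combining the traffic-demand constraint (\ref{traffic-Demand1}) with the constant-rate constraint (\ref{pcp-rate1}) immediately yields $t \geq D_n/r$ for each $n \in \mathbf{S}$, and maximizing over $n$ produces the power-independent lower bound $t \geq t^*$ which must hold for every feasible solution of Problem \ref{pcp_problem}. Hence no choice of transmit powers can drive the objective below $t^*$.

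Next, I would verify that $(\mathbf{P^{min}}, t^*)$ is feasible. The SNR constraint (\ref{pcp-SNR1}) holds with equality by the Perron-Frobenius construction of $\mathbf{P^{min}}$; the hypothesis $\mathbf{P^{min}} \preceq \mathbf{P_{max}}$ gives (\ref{pcp-pmax1}); the assumed per-user energy-causality condition gives (\ref{pcp-Ecausality1}) at $t = t^*$; and (\ref{traffic-Demand1}) together with (\ref{pcp-rate1}) is satisfied by the very definition of $t^*$ and the choice $x_n = r$. Therefore $(\mathbf{P^{min}}, t^*)$ is feasible and attains the lower bound, so it is an optimal solution.

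The main subtlety is that the energy-causality constraint (\ref{pcp-Ecausality1}) couples $P_n$ with $t$: reducing $P_n$ makes causality easier to meet, but the SNR requirement forbids arbitrary reduction. One might therefore worry that some other power vector could enable a strictly smaller $t$. This concern is eliminated by the power-independence of $t^*$, which shows no allocation can beat $t^*$ in the first place. Among all SNR-feasible power vectors, $\mathbf{P^{min}}$ is component-wise minimum by the preceding Lemma, making it the natural candidate whose causality and $P_{\max}$ feasibility are exactly the two hypotheses of the theorem — so once those hypotheses are granted, optimality follows without further work.
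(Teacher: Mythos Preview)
Your proposal is correct and follows essentially the same approach as the paper: the paper's proof simply observes that any feasible power vector yields the same objective value $t=\max_{n\in\mathbf{S}}D_n/r$, so feasibility of $\mathbf{P^{min}}$ suffices for optimality. Your version spells this out more carefully by separating the power-independent lower bound from the feasibility verification, but the underlying argument is identical.
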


\begin{proof}
Any feasible power vector gives the same solution to problem \ref{pcp_problem}, i.e., $t=\max \limits_{n\in \mathbf{S}} D_n/r$. Thus, if $\mathbf{P^{min}}$ is a feasible, it is an optimal solution.
\end{proof}
Since the computational complexity of evaluating Perron-Frobenius conditions for $\vert \mathbf{S} \vert$ users is $\mathcal{O}(\vert \mathbf{S} \vert^3)$, the computational complexity of solving the power control problem for constant transmission rate model is $\mathcal{O}(\vert \mathbf{S} \vert^3)$. 
\subsection{Continuous Transmission Rate Model}
The power control problem for the continuous transmission rate model is very similar to the constant transmission rate model problem except constraints (\ref{traffic-Demand1}), (\ref{pcp-rate1}) and (\ref{pcp-SNR1}) are replaced by the constraint given by 

\begin{equation}
\frac{D_n}{Wlog_2\bigg(1+\frac{P_ng_{nn}}{N_oW+W\sum_{j\in \mathbf{S}, j\neq n}P_{j}g_{jn}+\beta P_{h}}\bigg)} \leq t, \hspace*{0.2cm} n \in \mathbf{S}, \label{traffic-Demand2}
\end{equation}
 
Since higher transmission rate results in smaller transmission time, the optimal rate allocation is the maximum feasible achievable rate to minimize the length of transmission time slot. Therefore, power control and rate allocation problem can be simply reduced to a power control problem in which the rates of the users are simply a function of the transmission power. 
In the following, we first illustrate that in the optimal solution, transmission time of all the users are equal, i.e.,  $t_n^*=t^*$ for $\forall n \in \mathbf{S}$, and then, we propose an algorithm that searches for the optimal transmission time slot length by using the bisection method. 

%\begin{lemma}\label{Ecausality_Preservation}
%For the transmission of $D_n$ bits, using lower power results in lower energy consumption.
%\end{lemma}
%
%\begin{proof}
%The energy consumed by user $n$ to transmit  $D_n$ bits by using transmit power $P_n$ is $E_n=P_nt_n$, where, $t_n=D_n/Wlog_2(1+k_nP_n)$, resulting in $E_n=P_nD_n/Wlog_2(1+k_nP_n)$. It can be easily shown that $dE_n/dP_n \geq 0$ $\forall P_n>0$, which guarantees that consumed energy is an increasing function of the transmit power. Hence, lower transmit power will result in lower energy consumption for any user. 
%\end{proof}

\begin{lemma}\label{Power_Preservation}
If $\mathbf{P^{min}}=(\mathbf{I}-\mathbf{D_SG_S})^{-1}\mathbf{\sigma_S}$ vector does not satisfy the energy causality constraint, any feasible power vector $\mathbf{P^{'}} \succeq \mathbf{P^{min}}$ does not satisfy the energy causality.
\end{lemma}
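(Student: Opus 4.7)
The plan is to reduce the claim to a pointwise comparison between the consumed energy of $\mathbf{P^{min}}$ and that of any feasible alternative $\mathbf{P'}$, exploiting two properties: (i) $\mathbf{P^{min}}$ is the componentwise smallest feasible power vector, and (ii) the harvested energy rate $C_n$ is independent of the user transmit powers.

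First, I would fix the candidate slot length $t$ that implicitly underlies both the definition of $\mathbf{P^{min}}$ (through the SINR targets $\gamma_n=2^{D_n/(Wt)}-1$ that enter $\mathbf{D_S}$) and the energy causality inequality $B_n+C_n t - P_n t \geq 0$. By the Perron--Frobenius argument invoked in Section \ref{Sec:ConstantRateModel}, among all power vectors that achieve the required SINR targets the vector $\mathbf{P^{min}}=(\mathbf{I}-\mathbf{D_SG_S})^{-1}\mathbf{\sigma_S}$ is the componentwise minimum; hence any feasible $\mathbf{P'}$ necessarily satisfies $\mathbf{P'}\succeq \mathbf{P^{min}}$, which is exactly the assumption of the lemma.

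Next, I would observe from the non-linear harvesting model that $C_n$ is a function of $P_n^T=\sum_{k=1}^K h_{nk}P_h$ only, i.e., of the (constant) HAP transmit power and the downlink channel gains. In particular, $C_n$ does not change when the user transmit powers are increased from $\mathbf{P^{min}}$ to $\mathbf{P'}$. Therefore, if $\mathbf{P^{min}}$ violates energy causality for some user $n^\star$, so that $P_{n^\star}^{min}\,t > B_{n^\star}+C_{n^\star}t$, then monotonicity gives
\begin{equation}
P'_{n^\star}\,t \;\geq\; P_{n^\star}^{min}\,t \;>\; B_{n^\star}+C_{n^\star}\,t,
\end{equation}
so that $\mathbf{P'}$ violates the energy causality constraint of user $n^\star$ as well. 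This yields the contrapositive form of the lemma.

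The only delicate step is the first one: one must be sure that every rate-feasible $\mathbf{P'}$ really does dominate $\mathbf{P^{min}}$ componentwise. This is not automatic in general, but it is a classical consequence of the fact that $\mathbf{D_S G_S}$ is a nonnegative matrix whose spectral radius is strictly less than one whenever the system is feasible; the componentwise minimality of $(\mathbf{I}-\mathbf{D_SG_S})^{-1}\mathbf{\sigma_S}$ then follows from the Neumann series expansion $(\mathbf{I}-\mathbf{D_SG_S})^{-1}=\sum_{\ell\geq 0}(\mathbf{D_SG_S})^\ell$. I would cite this standard fact rather than re-deriving it, since the rest of the argument is a direct substitution into the energy causality inequality.
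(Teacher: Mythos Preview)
Your proof is correct, and it is cleaner than the paper's for the lemma as stated. The essential difference is that you fix the candidate slot length $t$ at the outset (legitimate, since $\mathbf{P^{min}}$ is defined through SINR targets $\gamma_n=2^{D_n/(Wt)}-1$ that only make sense for a given $t$); then the energy-causality inequality $B_n+C_nt-P_nt\geq 0$ is affine and decreasing in $P_n$, and the conclusion is a one-line monotonicity check. The paper instead lets the per-user transmission time vary with power, writes the consumed energy as $E_n=P_n t_n=P_nD_n/\bigl(W\log_2(1+k_nP_n)\bigr)$, and verifies $dE_n/dP_n>0$ together with the fact that the harvested energy $C_n t_n$ shrinks because $t_n$ shrinks. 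That is a strictly harder computation (and slightly delicate, since $k_n$ itself depends on the other users' powers). What the paper's longer argument buys is reusability: the monotonicity statement ``consumed energy increases with transmit power even when the time adapts'' is invoked again in the proofs of Theorem~2 and Lemma~\ref{lower-infeasibility}, where the slot length is genuinely changing. Your fixed-$t$ argument proves the lemma but would need to be supplemented at those later points.
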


\begin{proof}
Since the Perron-Frobenius conditions gives the component-wise minimum power vector $\mathbf{P^{min}}=(\mathbf{I}-\mathbf{D_SG_S})^{-1}\mathbf{\sigma_S}$, if this power vector does not satisfy the energy causality for any user $n$, any other power vector $\mathbf{P^{'}}$ with the $n$-th element $P_{n}^{'}>P_{n}^{min}$ does not satisfy energy causality constraint. The energy consumed by user $n$ to transmit  $D_n$ bits by using transmit power $P_n$ is $E_n=P_nt_n$, where $t_n=D_n/Wlog_2(1+k_nP_n)$, resulting in $E_n=P_nD_n/Wlog_2(1+k_nP_n)$. It can be easily shown that $dE_n/dP_n > 0$ for $\forall P_n>0$, which guarantees that consumed energy is an increasing function of the transmit power. Hence, higher transmit power results in higher energy consumption for any user. Furthermore, the harvested energy during transmission time also reduces due to the shorter transmission time. This is still a violation of energy causality.
\end{proof}
%For a certain $t$, the $\mathbf{P^{min}}$  power vector can be determined by using the Perron-Frobenius conditions. ince, Lemmas \ref{Ecausality_Preservation} and \ref{Power_Preservation} guarantees that $\mathbf{P^{min}}$ vector is the minimum feasible and hence, the optimal power vector, the following theorem states that the transmission slot length for each user in set $\mathbf{S}$ should be equal.
\begin{theorem}
Let $t^*=\max\limits_{n \in \mathbf{S}}t_n^*$ denote the optimal transmission slot length, where $t_n^*$ is the optimal transmission time of user $n$. Then, the transmission time of all the users should be equal, i.e.,  $t_n^*=t^*$ for $\forall n \in \mathbf{S}$.
\end{theorem}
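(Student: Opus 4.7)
The plan is to argue by contradiction: suppose $(\mathbf{P}^*, t^*)$ is an optimal solution of the continuous-rate power control problem in which some user $n_0 \in \mathbf{S}$ has $t_{n_0}^* < t^*$, and construct a feasible solution with a strictly smaller slot length, contradicting optimality.

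The key structural fact I would exploit is that the effective channel parameter $k_n = g_{nn}/(N_oW + W\sum_{j \neq n} P_j g_{jn} + \beta P_h)$ is independent of $P_n$ itself; it only depends on the powers of the other users in $\mathbf{S}$. Consequently, a perturbation of $P_{n_0}$ alone changes (i) user $n_0$'s own achievable rate $x_{n_0} = W\log_2(1 + k_{n_0} P_{n_0})$ and (ii) the effective channel parameters $k_j$ of every other user $j \neq n_0$, but nothing else. This decoupling is what makes a local perturbation argument clean.

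Concretely, I would decrease $P_{n_0}^*$ by a small $\delta > 0$ to $P_{n_0}' = P_{n_0}^* - \delta$, leaving all other powers fixed. For $\delta$ small enough, continuity gives $t_{n_0}' = D_{n_0}/x_{n_0}' < t^*$. For every other user $j$, the interference contribution from $n_0$ strictly decreases, so $k_j$ strictly increases and hence $x_j' > x_j^*$ and $t_j' < t_j^*$. In particular, for any user $j^*$ attaining $t_{j^*}^* = t^*$, we obtain $t_{j^*}' < t^*$. Setting $t' := \max_{n \in \mathbf{S}} t_n' < t^*$ then yields a slot length strictly below $t^*$. Feasibility at $t'$ is routine to verify: the $P_{max}$ constraint is preserved because only $P_{n_0}$ decreased, and the energy-causality constraint $B_n + (C_n - P_n)t \geq 0$ is preserved for every $n$, since for $n = n_0$ both $P_n$ and $t$ decreased, and for $n \neq n_0$ only $t$ decreased with $P_n$ unchanged (favorable when $P_n > C_n$ and trivially maintained when $C_n \geq P_n$ together with $B_n \geq 0$).

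The main obstacle is ensuring that at least one user $j^*$ with $t_{j^*}^* = t^*$ is actually helped by the reduction in interference from $n_0$, i.e., that $g_{n_0 j^*} > 0$. This is a mild and standard assumption in the multi-cell interference model adopted in Section \ref{sec:system}, where active users in different cells couple through nonzero cross channel gains; under it, the strict improvement $t' < t^*$ is guaranteed and the contradiction closes. The analogous sanity check with Lemma \ref{Power_Preservation} is that moving the operating point away from the component-wise minimum power vector (by permitting some $P_n$ to be larger than the Perron-Frobenius minimum for slot length $t^*$) can only waste power and tighten energy causality, never relax the optimal $t^*$.
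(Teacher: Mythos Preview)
Your proposal is correct and follows essentially the same contradiction argument as the paper: assume some user $n_0$ has $t_{n_0}^* < t^*$, reduce $P_{n_0}$ slightly so that $t_{n_0}'$ remains below $t^*$ while every other user's interference (hence transmission time) strictly drops, and conclude $\max_n t_n' < t^*$. Your treatment is in fact a bit more careful than the paper's, since you explicitly verify energy causality for all users (not just $n_0$) and flag the mild cross-gain requirement $g_{n_0 j^*}>0$ that the paper leaves implicit.
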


\begin{proof}
This can be proved by contradiction. Let $\{t_1^*, t_2^*, \cdots, t_{\vert \mathbf{S} \vert}^*\}$ denote the set of the optimal transmission times of the users in $\mathbf{S}$, leading to an optimal slot length $t^*=\max\limits_{j \in \mathbf{S}}t_j^*$, such that $t_n^*<t^*$ for a particular user $n\in \mathbf{S}$. Suppose that the transmit power of user $n$ is decreased by an arbitrarily small amount such that resulting transmission time $t_{n}^{'}$ is less than $t^*$ but greater than $t_n^*$; i.e., $t_n^*<t_{n}^{'}<t^*$. Then, the energy causality is not violated, since the consumed energy is an increasing function of the transmit power and the harvested energy increases due to longer transmission time as stated in Lemma \ref{Power_Preservation}. Obviously, the $P_{max}$ constraint is also not violated as we have reduced the transmit power. Moreover, since the transmission rate of any user $j$ is a decreasing function of the transmit power of user $n$, then for each user $j \neq n$ the resulting transmission time $t_j^{'}<t_j^*$ due to the reduced amount of interference caused by user $n$. The new time slot length $t^{'}=\max\limits_{j \in \mathbf{S}}t_j^{'}<\max\limits_{j \in \mathbf{S}}t_j^{*}=t^*$. This is a contradiction.      
\end{proof}

If we fix the length of the transmission slot, the optimal power control problem reduces to the feasibility problem of whether a set of power levels of the users in $S$ exists such that they satisfy the $P_{max}$ constraint without violating their energy causality constraints. The feasibility of such a set of transmit powers is determined by testing the Perron-Frobenius conditions as explained in Section \ref{Sec:ConstantRateModel}.

\begin{lemma}\label{lower-infeasibility}
For a concurrent transmission of a set of users $\mathbf{S}$, if a transmission time slot $t$ is infeasible, any value lower than $t$ is also infeasible.
\end{lemma}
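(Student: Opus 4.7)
The plan is to prove the contrapositive: if some $t^{\prime} < t$ is feasible, then $t$ is feasible. I would adopt the Perron-Frobenius characterisation developed for the constant-rate problem in Section \ref{Sec:ConstantRateModel}, adapted to the continuous-rate setting. For any candidate slot length $\tau > 0$, feasibility amounts to (i) the spectral condition $\rho(\mathbf{D_S}(\tau)\mathbf{G_S}) < 1$ with SNR targets $\gamma_n(\tau) = 2^{D_n/(W\tau)} - 1$, (ii) the bound $\mathbf{P^{min}}(\tau) = (\mathbf{I} - \mathbf{D_S}(\tau)\mathbf{G_S})^{-1}\mathbf{\sigma_S}(\tau) \preceq \mathbf{P_{max}}$, and (iii) $P_n^{min}(\tau)\,\tau \leq B_n + C_n \tau$ for every $n \in \mathbf{S}$, since any feasible power vector must componentwise dominate $\mathbf{P^{min}}(\tau)$.

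First I would establish monotonicity in $\tau$. Because $\tau \mapsto \gamma_n(\tau)$ is strictly decreasing, $t^{\prime} < t$ implies $\mathbf{D_S}(t^{\prime}) \succeq \mathbf{D_S}(t)$ and $\mathbf{\sigma_S}(t^{\prime}) \succeq \mathbf{\sigma_S}(t)$ entrywise. Since $\mathbf{G_S}$ is non-negative, the Neumann expansion of $(\mathbf{I} - \mathbf{D_S}(\tau)\mathbf{G_S})^{-1}$ yields that $\rho(\mathbf{D_S}(\tau)\mathbf{G_S})$ is non-increasing in $\tau$ and, when both sides exist, $\mathbf{P^{min}}(t) \preceq \mathbf{P^{min}}(t^{\prime})$ entrywise. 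This immediately transports (i) and (ii) from $t^{\prime}$ to $t$.

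For (iii) at $t$, I would combine the single-user fact from the proof of Lemma \ref{Power_Preservation} that $E_n(P_n) = P_n D_n / [W \log_2(1 + k_n P_n)]$ is strictly increasing in $P_n$ for any fixed $k_n$, with the observation that the effective $k_n$ at $\mathbf{P^{min}}(t)$ is at least as large as at $\mathbf{P^{min}}(t^{\prime})$, because the interfering powers are smaller at $t$ by the previous step. Both effects push the energy down, so $P_n^{min}(t)\,t \leq P_n^{min}(t^{\prime})\,t^{\prime} \leq B_n + C_n t^{\prime} \leq B_n + C_n t$, proving (iii) at $t$ and completing the contrapositive. The hard part will be this last step, because the single-user monotonicity of Lemma \ref{Power_Preservation} must be composed carefully with the multi-user interference coupling; the entrywise matrix monotonicity from the previous step is exactly the intermediate ingredient that makes the composition go through.
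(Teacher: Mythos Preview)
Your proof is correct and rests on the same monotonicity idea as the paper, but you frame it as the contrapositive and supply the explicit Perron--Frobenius/Neumann-series justification that the paper leaves informal. The paper argues directly: infeasibility at $t$ stems from either the $P_{max}$ constraint or energy causality, and shrinking the slot to $t'<t$ only worsens both, since the required SNR targets $\gamma_n(\tau)$ rise, forcing the minimum powers up (aggravating the $P_{max}$ violation) while simultaneously increasing consumed energy via Lemma~\ref{Power_Preservation} and decreasing harvested energy $C_n\tau$ (aggravating the energy-causality violation). Your contrapositive with the entrywise matrix inequality $\mathbf{P^{min}}(t)\preceq\mathbf{P^{min}}(t')$ and the two-variable energy bound $E(P,k)$ is a rigorous mirror image of that argument; what it buys is an explicit handle on the multi-user interference coupling, whereas the paper absorbs that coupling into the phrase ``requires more power'' without spelling out why the coupled minimum-power vector is itself monotone in~$\tau$.
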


\begin{proof}
For a set of users $\mathbf{S}$ transmitting concurrently in a transmission slot length $t$, the infeasibility may arise due to the violation of $P_{max}$ constraint or violation of energy causality constraint. If the infeasibility is due to $P_{max}$ constraint, any lower transmission slot length requires more power for information transmission, hence, still violates the constraint. If the infeasibility is due to energy causality constraint, lower transmission slot length results in lower harvested energy and higher transmit power, resulting in more energy consumption based on Lemma \ref{Power_Preservation}, therefore, still violates the constraint.
\end{proof}
 Lemma \ref{lower-infeasibility} allows the development of a fast bisection based algorithm (FBA) to determine the value of the optimal transmission slot length by halving the search space in each iteration. 
\begin{algorithm}[t]
	\caption{\small Fast bisection based algorithm (FBA)}\label{alg:fba}
	\begin{algorithmic}[1]	
	\small
	\Statex \textbf{Input}: A set of users $\mathbf{S}$
	\Statex \textbf{Output}: Transmission slot length $t^*$, transmit power vector $\mathbf{P^*}$
	\STATE $t_u \leftarrow \max \limits_{n\in \mathbf{S}} \big(D_n/Wlog_2(1+ P_{max}g_{nn}/(N_0W+W\sum_{j\neq n}P_{max}g_{jn}+\beta P_h))\big)$;
	\STATE $t_l \leftarrow \max \limits_{n\in \mathbf{S}} \big(D_n/Wlog_2(1+ P_{max}g_{nn}/(N_0W+\beta P_h))\big)$;
		\WHILE {$((t_u-t_l)/t_l)>\epsilon$}
		\STATE $t^*=(t_u+t_l)/2$;
		\STATE $\mathbf{P^{*}}=(\mathbf{I}-\mathbf{D_SG_S})^{-1}\mathbf{\sigma_S}$
		\IF {\text {the largest real eigenvalue of }$\mathbf{D_SG_S}<1$}	
		\IF {$\mathbf{P^{*}} \preceq  \mathbf{P_{max}}$ \text{and} $t^*$ \text{and} $\mathbf{P^{*}}$ \text{satisfy Eq. (\ref{pcp-Ecausality1})}}
%		\STATE{\Red{if this if is not true, algo start iterating infinitely.}}
%		\IF {$t^*$ \text{and} $\mathbf{P^{*}}$ \text{satisfy Eq. (\ref{pcp-Ecausality1})}}
		\STATE $t_u=t^*$;
		\ELSE
		\STATE  $t_l=t^*$;
		\ENDIF
		\ENDIF
		\ENDWHILE
		\STATE $t^*=t_u$
	\end{algorithmic}
\end{algorithm}  
The input of FBA algorithm is a set of users $\mathbf{S}$. The outputs include the optimal transmission slot length $t^*$ and the optimal transmit power vector $\mathbf{P^{*}}$. FBA determines an interval for the optimal value of the transmission slot length corresponding to the transmission of $D_i$ bits for each user $i \in \mathbf{S}$, and then, reduces the length of this interval by half in each iteration. The upper limit $t_u$ and lower limit $t_l$ of the transmission slot length $t$ are determined by maximum interference, i.e., all users in set $\mathbf{S}$ are transmitting at power $P_{max}$, and zero interference, i.e., all users have zero transmit power except user $n$, respectively (Lines 1-2). In each iteration, the algorithm checks the feasibility of the middle point of the interval ($t_u$, $t_l$) (Lines 5-7). For the feasibility, it evaluates the power vector corresponding to transmission slot length $t^*$ (Line 5); checks the Perron-Frobenius conditions one by one, i.e., the largest eigenvalue of $\mathbf{D_SG_S} < 1$ and every element of the power vector $\mathbf{P^{*}}$ is less than or equal $P_{max}$, i.e., $\mathbf{P^{*}} \preceq  \mathbf{P_{max}}$; and then, checks the energy causality constraint for each user (Lines 6-7). If the middle point $t^*$ is feasible, then any point greater than $t^*$ can be discarded from the search space by updating the upper limit (Line 8). On the other hand, if the middle point $t^*$ is infeasible, any value lower than this middle point is also infeasible, as shown in Lemma \ref{lower-infeasibility}, necessitating the search in interval ($(t_l+t_u)/2$, $t_u$) by updating the lower limit (Line 10). The algorithm terminates when it achieves a pre-defined relative error bound denoted by $\epsilon$ (Line 3). The value of the transmission slot length is then set to $t_u$ (Line 14). Since the optimal transmission slot length exists in the interval ($t_l$,$t_u$), $t_l$ may give an infeasible solution point. 

The computational complexity of the standard bisection method is $K_B=\lceil log((t_{u}^{int}-t_{l}^{int})/\epsilon t_{l}^{int})\rceil$, where $t_{l}^{int}$ and $t_{u}^{int}$ are the initial lower and upper limits set by the algorithm in (Lines 1-2), respectively. Therefore, the computational complexity of FBA is $O(K_B \vert \mathbf{S} \vert^3)$ due to the evaluation of the Perron-Frobenius conditions in each iteration.
\section{Scheduling}\label{sec:scheduling}
In Section \ref{sec:PCP}, we solved the power control problem optimally for a known set of users transmitting information simultaneously. For all the users transmitting concurrently, the information transmission may be too long due to extensive interference caused to each other. Therefore, the schedule length can further decrease by intelligently grouping the users into a set of subsets for concurrent transmission. However, determining the optimal subsets of the users transmitting concurrently and their schedule require an exponential computational effort. Let $\nu=\{\nu_j:1 \leq j \leq \vert \nu \vert\}$ denote the set of all possible partitions of the set of users $\mathcal{N}=\{ 1, \cdots, N\}$, where a partition of set $\cal{N}$ is defined as a set of nonempty mutually exhaustive and mutually exclusive subsets of set $\cal{N}$, i.e., for the set of $10$ users, $\{\{1, 5, 8\},\{4, 7, 9, 10\},\{2, 3, 6\}\}$ is one possible partition of set $\cal{N}$. The total number of such partitions for a set of $N$ users is $\vert \nu \vert=\sum_{j=0}^{N-1}{N-1 \choose j}B_{j}$, where $B_j$ is a Bell number with $B_0=1$ and $B_j=\sum_{z=0}^{j-1}{j-1 \choose z}B_{z}$. For a given partition of $\cal{N}$, each subset of users transmit information concurrently  within a time slot. Therefore, for a specific partition $\nu_j$ with $\zeta_j$ number of subsets, the total number of possible transmission sequences are $\zeta_j!$, i.e., for the above mentioned example $\zeta_j=3$ resulting in 6 possible transmission sequences. Then, total number of possible transmission sequences for a set of users $\cal{N}$ are $\sum_{j=1}^{\vert \nu \vert} \zeta_j !$. This exponential computational complexity of the scheduling problem makes the problem intractable even for a medium size networks. Hence, scalable and fast solutions are required. In the following, we present the scheduling problems for the constant and continuous rate transmission models.

\subsection{Scheduling for Constant Rate Transmission Model}

In this section, we present a heuristic scheduling algorithm for the constant transmission rate model, which aims at maximizing the number of users within each concurrently transmitting set by maximizing the interference on each user without violating the energy causality, SNR requirements and maximum transmit power constraint.

Let $I_{n}^{max}$ denote the maximum allowed interference level from other users that a particular user $n$ can afford, i.e., any interference level greater than $I_{n}^{max}$ results in an infeasible transmission scenario for user $n$. This maximum allowed interference can be determined by using Eq. (\ref{constantRate}) as follows:
\begin{equation}
I_{n}^{max} = \frac{1}{W}\bigg[\frac{P_ng_{nn}}{\gamma_n}-N_0W-\beta P_h\bigg],\label{max-allowed-interference}
\end{equation}
It is important to note that the maximum allowed interference level of any user is an increasing function of the transmit power of that user. Hence, increasing the transmit power allows the user to accommodate more interference and possibly allocation of more simultaneously transmitting users in a slot. Furthermore, for a user $n$, the earliest scheduling time $ts_{n}$ is defined as the first time instant when it can afford transmission at constant rate individually and is given by $ts_{n}=\max(0,P_{n}^{min}D_n/C_nr-B_n/C_n-D_n/r)$, %which is equal to 
% \begin{equation}
%ts_{n}=\max \bigg(0,\min \bigg(P_{max}, \frac{\gamma_n (N_0W+\beta P_h)}{g_{nn}}\bigg) \frac{D_n}{C_nr}-\frac{B_n}{C_n}-\frac{D_n}{r}\bigg), \label{earliest-time}
%\end{equation}
%t=D_n/r
where $P_{n}^{min}=min(P_{max},\gamma_n (N_0W+\beta P_h)/g_{nn})$, is the minimum power required to achieve the SNR $\gamma_n$ without any interference, which is derived from Eq. (\ref{pcp-SNR1}) for the case when the user transmits individually. The addition of any user in set $\mathbf{S}$ results in a delay in the earliest time instant $ts_{n}$ due to increased interference. 

The scheduling algorithm should determine the subsets of users such that the maximum number of users transmit their information simultaneously with the interference level close to the $I_{n}^{max}$ for each user $n$. Therefore, at any decision time $t_{dec}$, first we determine a set of users that can afford the constant rate at $t_{dec}$, i.e., the users for which $ts_{n} \leq t_{dec}$. Then, for this set of users, we evaluate the maximum power they can afford, i.e., $P_n = max(P_{max}, E_n(t_{dec})/(D_n/r))$ and the maximum level of interference they can afford from other users in a concurrent transmission, i.e., $I_{n}^{max}=\frac{1}{W}(P_n g_{nn}/\gamma_n-N_0W-\beta P_h)$. Then, CRSA evaluates the pairwise interference of the feasible users and group the users from different cells such that number of concurrently transmitting users within each time slot is maximized without violating the maximum allowed interference level. 
\begin{algorithm}[ht!]
	\caption{\small Constant Rate Scheduling Algorithm (CRSA)}\label{alg:crsa}
	\begin{algorithmic}[1]	
	\small
	\Statex \textbf{Input}: A set of users $\mathcal{N}$
	\Statex \textbf{Output}: Simultaneously transmitting user set in each slot $\mathbf{S}$, transmission time for each slot $\mathbf{t^*}$, transmit power vector $\mathbf{P^*}$
	\STATE  $m \leftarrow 1$, $t_{dec} \leftarrow 0$, $\mathbf{P} \leftarrow \mathbf{0}$, $\mathbf{I^{max}} \leftarrow \mathbf{0}$, $\mathbf{I} \leftarrow \mathbf{0}$ 	
	\STATE determine $ts_{n}$, for all $n \in \mathcal{N}$ by using Eq. (\ref{earliest-time}) 
	%\STATE $t_{dec} \leftarrow \min_{n \in \mathcal{N}}ts_{n}$
\WHILE {$\mathcal{N} \neq \emptyset$}
	\STATE $\mathcal{F} \leftarrow \emptyset, \mathbf{S^m} \leftarrow \emptyset$
	\IF {$t_{dec}< \min \limits_{n\in \mathcal{N}}ts_{n}$}
		\STATE $t_{dec}= \min \limits_{n\in \mathcal{N}}ts_{n}$
	\ENDIF
	\FOR {$n \in \mathcal{N}$}
			\IF{$ts_{n} \leq t_{dec}$}
				\STATE $\mathcal{F} \leftarrow \mathcal{F} + n$%F is set of feasible users
				\STATE $E_n(t_{dec})=B_n+C_n(t_{dec}+D_n/r)$
				\STATE $P_n \leftarrow min(P_{max}, E_n(t_{dec})/(D_n/r))$
				\STATE $I_{n}^{max} \leftarrow \frac{1}{W}\bigg[\frac{P_n g_{nn}}{\gamma_n}-N_0W-\beta P_h\bigg]$
			\ENDIF
	\ENDFOR
		\STATE evaluate interference  $I_{ns}$ of user $n$ to $s$, for all $n,s \in  \mathcal{F}$
		\STATE $u \leftarrow arg \max \limits_{j\in \mathcal{F}} I_{j}^{max} $
		\STATE $I_s \leftarrow 0$, for all $s \in  \mathcal{F}$
		\STATE $\mathbf{S^m} \leftarrow \mathbf{S^m} + \{u\}$
	\FOR {$k=1:K$}
		\IF {$u \notin$ $L_k$}
		\STATE $\mathcal{F}^{k} \leftarrow  L_k \cap \mathcal{F}$
		\STATE sort $\mathcal{F}^{k}$ in descending order of $I_{n}^{max}$
	\FOR {$v \in \mathcal{F}^{k}$}
		\IF {$ \sum_{j \in \mathbf{S^m}} I_{jv} \leq I_{v}^{max} $}%Previous interference should be less than upper limit of user v
		\IF {$I_{s}^{max}-I_s \geq I_{vs}, \hspace*{0.2cm} \forall s \in \mathbf{S^m}$} %addition of v should hold feasibility for all user
			\STATE $I_v \leftarrow \sum_{j \in \mathbf{S^m}} I_{jv}$	
			\STATE $I_s=I_s+I_{vs}, \hspace*{0.2cm} \forall s \in \mathbf{S^m}$		
			\STATE $\mathbf{S^m} \leftarrow \mathbf{S^m}+\mathcal{F}^{k}(v)$
			\STATE \text{break,}
		\ENDIF
		
		\ELSE
			\STATE \text{break,}
		\ENDIF 
		\ENDFOR
		\ENDIF
	\ENDFOR
		\STATE $t_{m}^{*} \leftarrow \max \limits_{s \in \mathbf{S^m}} D_s/r, $
		\STATE $\mathcal{N} \leftarrow \mathcal{N}-\mathbf{S^m}$
		\STATE $t_{dec}=t_{dec}+t_{m}^{*}$
		\STATE $m \leftarrow m+1$
\ENDWHILE
	\end{algorithmic}
\end{algorithm} 

Constant Rate Scheduling Algorithm (CRSA) is given in Algorithm \ref{alg:crsa} and described in detail next. Let $\mathbf{S}=\{\mathbf{S^1}, \cdots, \mathbf{S^M}\}$ be a set of subsets of users concurrently transmitting in time slots $\mathbf{t^*}=\{t_{1}^{*}, \cdots, t_{M}^{*}\}$ by using powers $\mathbf{P^*}=\{\mathbf{P_{1}^{*}}, \cdots, \mathbf{P_{M}}\}$.  The $m^{th}$ entry of $\mathbf{S}$ denoted by $\mathbf{S^m}$ contains a subset of users that transmit simultaneously in the $m^{th}$ time slot with duration $t_{m}^{*}=\max \limits_{s \in \mathbf{S^m}} D_s/r$ and corresponding power vector $\mathbf{P_{m}^{*}}$. Let $\mathbf{I^{max}}$ be a $N \times 1$ maximum affordable interference vector, in which $j$-th entry represents the maximum allowed interference level from other users on user $j$. Let $\mathbf{I}$ be a $N \times K$ pairwise interference matrix, in which $I_{ij}$ is the entry of $i^{th}$ row and $j^{th}$ column representing the interference caused by user $i$ on the HAP $j$.
The input of CRSA is a set of users denoted by $\mathcal{N}$, whereas its outputs include a set $\mathbf{S}$, and corresponding time and power allocation vectors denoted by $\mathbf{t^*}$ and $\mathbf{P^*}$, respectively. The algorithm starts by initializing the transmission slot number $m$ to $1$, decision time $t_{dec}$ to $0$, power vector $\mathbf{P}$, maximum affordable interference $\mathbf{I^{max}}$ and interference matrix $\mathbf{I}$ to $\textbf{0}$ (Line 1). Then, it determines the earliest starting time $ts_n$ for every user $n, n \in \mathcal{N}$ (Line 2). In each iteration, the algorithm determines the set $\mathbf{S^m}$ as well as their powers. The $CRSA$ first updates the decision time by setting it equal to the minimum of earliest starting times of the users, since no user has enough energy for transmission earlier than that (Lines 5-7). Then, it checks the feasibility of each unallocated user based on their earliest scheduling time (Line 9). For the set of feasible users at time $t_{dec}$, denoted by $\mathcal{F}$, $CRSA$ evaluates the amount of energy each user needs to complete its transmission, maximum transmit power and the maximum interference each user can afford (Lines 11-13). Once the set of feasible users $\mathcal{F}$ and their parameters are determined, the algorithm groups the users in set $\mathcal{F}$ with similar interference characteristics for concurrent transmission. For this purpose, the algorithm evaluates the pairwise interference that each user in $\mathcal{F}$ causes to others while transmitting simultaneously (Line 16). $CRSA$ picks the user that can afford the maximum interference from other users, since this allows the addition of more users in the set of simultaneously transmitting users (Line 17). Then, it initializes the interference that is created on a certain user by concurrent transmissions $I_s$ to $0$ and adds the user that can afford maximum interference to the set $\mathbf{S^m}$ for transmission in the current time slot (Lines 18-19). The algorithm tries to group the most suitable user from each cell iteratively (Lines 20-37). For this, $CRSA$ visits each cell one by one and terminates the search when either a feasible user is found or no user can be grouped from the current cell. For each cell, $CRSA$ first sorts all the feasible users within the cell based on their $I_{n}^{max}$ values (Line 23). Then, it checks the compatibility for each user for possible concurrent transmission until it finds a suitable user from current cell. For a user $v$ to be compatible for concurrent transmission in the set $S^{m}$, the total interference caused by all the users in set $\mathbf{S^m}$ must be less than $I_{v}^{max}$, i.e., $ \sum_{j \in \mathbf{S^m}} I_{jv} \leq I_{v}^{max} $ (Line 25) and the interference caused by user $v$ must be less than the affordable interference margin of the existing users in set $\mathbf{S^m}$, i.e., $I_{s}^{max}-I_s \geq I_{vs}, \hspace*{0.2cm} \forall s \in \mathbf{S^m}$ (Line 26). If the first condition is not valid for the user with maximum $I_{n}^{max}$ value in current cell, this implies that no other user from current cell can afford this much interference, hence the algorithm moves to next cell without picking any user from this cell (Line 33). If the first condition is true but the user is causing excessive interference to the users in set $S^{m}$, the algorithm keeps on searching among the remaining feasible users within current cell for simultaneous transmission (Lines 24-32). If both of these conditions are true, the algorithm updates the interference levels of set $S^{m}$, adds this user to set $S^{m}$ and moves to the next cell (Lines 27-29). Once all the cells are visited, $CRSA$ evaluates the transmission time slot length for $S^{m}$, updates the set of unallocated users, decision time and the slot number (Lines 38-41). The algorithm terminates when all the users are allocated (Line 3).

%-Allocation of users based on their waiting times, i.e., try to minimize the waiting time of each subset of users transmitting concurrently. 
\subsection{Scheduling for Continuous Rate Transmission Model}
In this section, we present a heuristic scheduling algorithm for the continuous transmission rate model. First, we define a penalty function for a set of users transmitting concurrently as the difference between their concurrent transmission time and the sum of individual minimum transmission times of the users. Then, we demonstrate the equivalence between the schedule length minimization objective and the sum of penalties minimization. We present a heuristic scheduling algorithm for continuous rate model based on the derivation of a penalty criterion for the addition of a new user to an existing set of users transmitting simultaneously. 

Let the minimum transmission time $t_{i}^{min}$, $i \in \{1,2, \cdots, \vert \mathcal{N} \vert \}$ be the transmission time for a user $i$ while transmitting individually at power $P_i=P_{max}$. Let $s$ be the starting time of the transmission for a concurrently transmitting set of users $\mathbf{S}$. Let $\tau_{\mathbf{S}}(s)$ be the time slot length for a set of concurrently transmitting users $\mathbf{S}$ evaluated at any decision time $s$ by using Algorithm \ref{alg:fba} considering the amount of energy available at decision time $s$. In the following, we define a new metric incorporating $\tau_{\mathbf{S}}(s)$ and $t_{i}^{min}$ as follows: 

\begin{definition} \label{def_penalty}
The penalty function of a set of users $\mathbf{S}$, $\rho_{\mathbf{S}}(s)$, is defined as the difference between the concurrent transmission time of set $\mathbf{S}$, denoted by $\tau_{\mathbf{S}}(s)$, and the sum of individual minimum possible transmission times of the users in set $\mathbf{S}$, and formulated as $\rho_{\mathbf{S}}(s)=\tau_{\mathbf{S}}(s)-\sum_{i=1}^{\vert \mathbf{S} \vert}t_{i}^{min}$.    
\end{definition}
%\Red{The maximum penalty $\rho_\mathbf{S}^{max}=\tau_{\mathbf{S}}(0)-\sum_{i=1}^{\vert \mathbf{S} \vert}t_{i}^{min}$, is the maximum penalty when $\mathbf{S}$ is scheduled at time $s=0$. We can add it as a lemma as it is similar as penalty is a non decreasing function.}

%In \cite{MLSP}, a minimum length scheduling problem for a single cell full-duplex WPCN is proposed, where users transmit their data by using the TDMA protocol. As a solution strategy, authors defined a penalty function as $\rho_i(s_i)=e_i(s_i)-s_i-t_{i}^{min}$, where $s_i$ and $e_i(s_i)$ are the transmission start and end time of user $i$, respectively. It is proved that for a single cell network, minimizing the sum of penalties of all users is equivalent to minimizing the schedule length. Moreover, in an online scheduling policy it is proved that allocating a zero penalty user to the current time slot is an optimal decision.

% \begin{figure}[t]
% \centering
%\includegraphics[width= 0.5 \linewidth]{_fig_Penalty}
%\caption{Graphical illustration of concurrent and TDMA penalty function for two users.} \label{Fig:penalty}
%\end{figure}

In the foregoing definition, we extend the idea of the penalty function defined for a single cell scenario in \cite{MLSP} to a multi-cell scenario for concurrent transmission. %It is important to note that if the set $\mathbf{S}$ includes a single user, the new penalty function is equivalent to the penalty function defined in \cite{MLSP} for a single user, as $\tau_i(s)=e_i(s)-s$ for the single user. 
For the scenario in which $\rho_\mathbf{S}(s)=0$, i.e., $\tau_{\mathbf{S}}(s)=\sum_{i=1}^{\vert \mathbf{S} \vert}t_{i}^{min}$, the concurrent transmission slot length is exactly equal to the TDMA schedule length in which all users transmit at $P_{max}$ individually, i.e., $P_i=P_{max}, \forall i\in \mathbf{S}$, which is the lower bound for the schedule length of a TDMA transmission model. If the penalty is negative, i.e., $\rho_{S}(s)<0$, it means that the concurrent transmission of set $\mathbf{S}$ is even better than the best TDMA schedule i.e., $\tau_{\mathbf{S}}(s) < \sum_{i=1}^{\vert \mathbf{S} \vert}t_{i}^{min}$, which motivates that the set of users $\mathbf{S}$ should be evaluated for concurrent transmission. The following Lemma states that for a given set of users $\mathcal{N}$ and a predetermined set of subsets $\mathbf{S^m}$ forming a partition of set $\mathcal{N}$, minimizing the sum of penalties is equivalent to minimizing the schedule length.  
%\Red{If the $\rho_{S}(s)>0$, the concurrent transmission slot length is larger than the minimum schedule length of TDMA transmission, hence concurrent transmission may not be very beneficent and such concurrent transmissions should be avoided.}
%since $\sum_{i=1}^{\vert \mathbf{S} \vert}t_{i}^{min}$ is the best TDMA schedule length in which all users in set $\mathbf{S}$ are transmitting at power $P_{max}$ and this is the lower limit for the TDMA transmission.

\begin{lemma} \label{lemma_Obj_equal}
For a given set of users $\mathcal{N}$ and a partition of set $\mathcal{N}$ consisting of $M$ subsets $\mathbf{S^m}$, $m \in \{1,2, \cdots, M\}$, transmitting concurrently, the objective of minimizing the schedule length $ min\sum_{m=1}^{M}\tau_{\mathbf{S^m}}$ is equivalent to minimizing the sum of penalties $ min\sum_{m=1}^{M} \rho_{\mathbf{S^m}}(s) $.
\end{lemma}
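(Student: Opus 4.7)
The plan is to expand the sum of penalties using the definition and show that the two objectives differ only by an additive constant that depends on $\mathcal{N}$ but not on the particular partition. Since the partition $\{\mathbf{S^m}\}_{m=1}^{M}$ is fixed in the statement, this additive constant plays no role in the minimization, so the two objectives share the same minimizers.

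First I would substitute Definition \ref{def_penalty} into the sum $\sum_{m=1}^{M}\rho_{\mathbf{S^m}}(s)$ to obtain
\begin{equation}
\sum_{m=1}^{M}\rho_{\mathbf{S^m}}(s)=\sum_{m=1}^{M}\tau_{\mathbf{S^m}}(s)-\sum_{m=1}^{M}\sum_{i\in\mathbf{S^m}}t_{i}^{min}.
\end{equation}
Next I would invoke the partition property: since the subsets $\mathbf{S^1},\ldots,\mathbf{S^M}$ are mutually exclusive and mutually exhaustive with respect to $\mathcal{N}$, the inner double sum can be reindexed as
\begin{equation}
\sum_{m=1}^{M}\sum_{i\in\mathbf{S^m}}t_{i}^{min}=\sum_{i\in\mathcal{N}}t_{i}^{min}.
\end{equation}
The right-hand side depends only on the individual minimum transmission times $t_{i}^{min}$ of the users in $\mathcal{N}$, each of which is determined solely by the per-user parameters (channel gain, traffic demand, $P_{max}$, noise plus HAP self-interference) and is therefore a constant with respect to the choice of partition and the scheduling decisions.

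Combining the two displays yields
\begin{equation}
\sum_{m=1}^{M}\rho_{\mathbf{S^m}}(s)=\sum_{m=1}^{M}\tau_{\mathbf{S^m}}(s)-\kappa,\qquad \kappa:=\sum_{i\in\mathcal{N}}t_{i}^{min}.
\end{equation}
Because $\kappa$ is a constant independent of the optimization variables, any partition that minimizes $\sum_{m=1}^{M}\tau_{\mathbf{S^m}}(s)$ also minimizes $\sum_{m=1}^{M}\rho_{\mathbf{S^m}}(s)$, and vice versa, establishing the claimed equivalence.

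The argument is essentially a one-line algebraic manipulation; there is no genuine obstacle. The only point requiring a moment of care is to explicitly justify that $t_{i}^{min}$ is invariant across partitions (it is defined as the individual transmission time at $P_{max}$ with no concurrent interferers, which depends only on user $i$'s own parameters), so that the partition-dependent double sum collapses into a partition-independent constant. Once that observation is made, the equivalence follows immediately.
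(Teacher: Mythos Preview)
Your proof is correct and follows essentially the same approach as the paper: substitute Definition~\ref{def_penalty}, use the partition property to collapse $\sum_{m}\sum_{i\in\mathbf{S^m}}t_i^{min}$ into $\sum_{i\in\mathcal{N}}t_i^{min}$, and observe that this term is a constant independent of the optimization variables. Your version is slightly more explicit in naming the constant and justifying the invariance of $t_i^{min}$, but the argument is the same.
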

%such that $\cup_{m=1}^{M}\mathbf{S^m}=\mathcal{N}$

\begin{proof}
By definition of the penalty function, 
\begin{equation} \label{penalty_1}
min\sum_{m=1}^{M} \rho_{\mathbf{S^m}}(s) = min \sum_{m=1}^{M} \tau_{\mathbf{S^m}}(s)-\sum_{m=1}^{M}\sum_{i\in \mathbf{S^m}}t_{i}^{min}.
\end{equation}
Since $\mathbf{S^m}, m \in \{1,2, \cdots, M\}$, is a partition of $\mathcal{N}$, this can be rewritten as follows:
\begin{equation} \label{penalty_2}
min\sum_{m=1}^{M} \rho_{\mathbf{S^m}}(s) = min \sum_{m=1}^{M} \tau_{\mathbf{S^m}}(s)-\sum_{i\in \mathcal{N}}t_{i}^{min}
\end{equation}
Since $t_i^{min}$ is constant for any user $i$, the term $\sum_{i\in \mathcal{N}}t_{i}^{min}$ can be removed from the objective function. Then, minimization of the schedule length is equivalent to minimization of the sum of penalties.
\end{proof}

%The previous lemma suggests that in a scheduling policy, the currently selected subset of users transmitting simultaneously should have minimum possible penalty among feasible subsets. 
In the following, we derive a criterion for the addition of any new user to a set of existing concurrently transmitting users. 
For a given set of users $\mathbf{S}=\{1,2, \cdots, \vert \mathbf{S} \vert\}$ transmitting concurrently, addition of any new user to set $\mathbf{S}$ causes additional interference for the existing users, hence, increases the concurrent transmission time. Let $\rho_{\mathbf{S}}(s)=\tau_{\mathbf{S}}(s)-\sum_{i=1}^{\vert \mathbf{S} \vert}t_{i}^{min}$ is the penalty of set $\mathbf{S}$ such that $\rho_{\mathbf{S}}(s)<0$, and suppose that an additional user $j$ is added to the set $\mathbf{S}$ such that resulting set is denoted by $\mathbf{S^{'}}= \mathbf{S}+ \{j\}$ with penalty $\rho_{\mathbf{S^{'}}}(s)=\tau_{\mathbf{S^{'}}}(s)-\sum_{i=1}^{\vert \mathbf{S^{'}} \vert}t_{i}^{min}$. If $\rho_{\mathbf{S^{'}}}(s)>\rho_{\mathbf{S}}(s)$, i.e., $\tau_{\mathbf{S^{'}}}(s)-t_{j}^{min}>\tau_{\mathbf{S}}(s)$, the difference between $\tau_{\mathbf{S^{'}}}(s)$ and $\tau_{\mathbf{S}}(s)$ is greater than $t_{j}^{min}$, which means that the addition of user $j$ results in an increase in the concurrent transmission time $\tau_{\mathbf{S}}(s)$ by an amount greater than $t_{j}^{min}$ due to its excessive interference on the existing users. Therefore, 
%such scenarios should be avoided while scheduling the users together. Let $\tau_{tdma}$ denote the best TDMA schedule length, i.e., all users are transmitting at power $P_{max}$ and let the optimal minimum schedule length for users in $\mathbf{S}$ is $\tau_{\mathbf{S}}^{opt}(s)$. Then, since the $\rho_{\mathbf{S}}$ is negative, meaning that $\tau_{\mathbf{S}}(s)<\sum_{i=1}^{\vert \mathbf{S} \vert}t_{i}^{min}$, the transmission time $\tau_{\mathbf{S}}(s)$ of set $\mathbf{S}$ is between $\tau_{tdma}$ and $\tau_{\mathbf{S}}^{opt}(s)$, i.e.,$\tau_{\mathbf{S}}^{opt}(s)<\tau_{\mathbf{S}}(s)<\tau_{tdma}$. The increase in penalty due to additional user $j$ most probably shifts the concurrent allocation towards $\tau_{tdma}$, which contradicts with our objective of the minimization of the schedule length. Therefore, 
if the addition of a new user to the concurrently transmitting set of users increases the overall penalty of the set, the new user should not be allocated in the set $\mathbf{S}$ for the simultaneous transmission. On the other hand, if $\rho_{\mathbf{S^{'}}}(s) \leq \rho_{\mathbf{S}}(s)$, i.e., $\tau_{\mathbf{S^{'}}}(s)-t_{j}^{min} \leq \tau_{\mathbf{S}}(s)$, the addition of new user $j$ to set $\mathbf{S}$ is advantageous for our objective, since this addition reduces the concurrent transmission time of the users by an amount greater than $t_{j}^{min}$, which is the minimum transmission time for user $j$ individually, meaning that if the set $\mathbf{S}$ and user $j$ transmit separately, the resulting transmission time is definitely larger. %This guarantees that the simultaneously allocating set $\mathbf{S}$ and user $j$ is better than allocating them separately. Considering this, the increasing value of the penalty $\rho_{\mathbf{S^{'}}}(s)$ leads the schedule length towards $\tau_{tdma}$ or greater than that. On the other hand, decreasing the value of the penalty $\rho_{\mathbf{S^{'}}}(s)$ results in a schedule length closer to $\tau_{\mathbf{S}}^{opt}(s)$. Therefore, if $\rho_{\mathbf{S^{'}}}(s) \leq \rho_{\mathbf{S}}(s)$, the user $j$ should be added to the set $\mathbf{S}$, otherwise it should not be scheduled with set $\mathbf{S}$. 

\begin{lemma} \label{lemma_user_addition}
For a given set of users $\mathbf{S}$ transmitting simultaneously with penalty $\rho_{\mathbf{S}}(s)<0$ and a user $j \notin \mathbf{S}$ with penalty $\rho_j(s)=0$, the concurrent transmission of the set $\mathbf{S^{'}}=\mathbf{S}+\{j\}$ is minimized if and only if $\rho_{\mathbf{S^{'}}}(s)<\rho_{\mathbf{S}}(s)$.%, where $\rho_{\mathbf{S^{'}}}(s)$ is the penalty of the set $\mathbf{S^{'}}=\mathbf{S}+\{j\}$ at time $s$.
\end{lemma}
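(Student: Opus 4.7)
The plan is to reduce the claim to a one-line algebraic manipulation of Definition \ref{def_penalty} combined with the schedule-length versus sum-of-penalties equivalence established in Lemma \ref{lemma_Obj_equal}. First, I would pin down what "is minimized" means here by identifying the two scheduling alternatives being compared at decision time $s$: (i) the baseline in which $\mathbf{S}$ is served concurrently and user $j$ is served in a separate singleton slot, yielding total length $\tau_{\mathbf{S}}(s)+\tau_{\{j\}}(s)$; and (ii) the merged option in which $\mathbf{S^{'}}=\mathbf{S}+\{j\}$ is served concurrently, yielding length $\tau_{\mathbf{S^{'}}}(s)$. Lemma \ref{lemma_Obj_equal} guarantees that (ii) dominates (i) if and only if (ii) has a smaller sum of penalties than (i).

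Next, I would invoke the hypothesis $\rho_j(s)=0$, which by Definition \ref{def_penalty} forces $\tau_{\{j\}}(s)=t_j^{min}$. Option (i) therefore has sum of penalties $\rho_{\mathbf{S}}(s)+\rho_j(s)=\rho_{\mathbf{S}}(s)$, while option (ii) has sum of penalties $\rho_{\mathbf{S^{'}}}(s)$. The equivalence from Lemma \ref{lemma_Obj_equal} then turns the claim "(ii) minimizes the concurrent transmission" directly into
\[
\rho_{\mathbf{S^{'}}}(s)<\rho_{\mathbf{S}}(s),
\]
which is exactly the iff to be proved.

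As a final sanity check, I would expand both sides using Definition \ref{def_penalty} and the identity $\sum_{i\in \mathbf{S^{'}}} t_i^{min}=\sum_{i\in \mathbf{S}} t_i^{min}+t_j^{min}$. The inequality $\rho_{\mathbf{S^{'}}}(s)<\rho_{\mathbf{S}}(s)$ then collapses to $\tau_{\mathbf{S^{'}}}(s)<\tau_{\mathbf{S}}(s)+t_j^{min}$, which is precisely the direct schedule-length comparison between options (ii) and (i); this confirms that no further structural argument about the power-control output of Algorithm \ref{alg:fba} is needed.

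The main obstacle is conceptual rather than computational: one must justify interpreting "the concurrent transmission of $\mathbf{S^{'}}$ is minimized" as "merging $\{j\}$ into $\mathbf{S}$ strictly shortens the schedule relative to keeping them apart," since the lemma as stated does not mention a minimization variable explicitly. Once this interpretation is anchored via Lemma \ref{lemma_Obj_equal}, the proof is a short calculation; no additional properties of $\tau_{\mathbf{S^{'}}}(s)$ (such as monotonicity in $\mathbf{S}$ or convexity in the interference profile) need to be invoked.
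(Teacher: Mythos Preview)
Your proposal is correct and follows essentially the same route as the paper: both arguments use $\rho_j(s)=0\Rightarrow \tau_{\{j\}}(s)=t_j^{min}$ and then unwind Definition~\ref{def_penalty} to obtain the equivalence $\rho_{\mathbf{S^{'}}}(s)<\rho_{\mathbf{S}}(s)\iff \tau_{\mathbf{S^{'}}}(s)<\tau_{\mathbf{S}}(s)+\tau_j(s)$, which is exactly the schedule-length comparison between merging and keeping $j$ separate. Your detour through Lemma~\ref{lemma_Obj_equal} is harmless but superfluous, since the ``sanity check'' you give at the end \emph{is} the paper's proof in its entirety.
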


\begin{proof}
According to the definition of penalty, $\rho_{\mathbf{S}}(s)=\tau_{\mathbf{S}}(s)-\sum_{i \in \mathbf{S}}t_{i}^{min}$ and $\rho_j(s)=0$ indicate that $\tau_j(s)=t_{j}^{min}$. Then, after the addition of user $j$ to the set $\mathbf{S}$ for concurrent transmission, the resulting penalty $\rho_{\mathbf{S^{'}}}(s)=\tau_{\mathbf{S^{'}}}(s)-\sum_{i \in \mathbf{S}}t_{i}^{min}-t_{j}^{min}$ which is equivalent to $\rho_{\mathbf{S^{'}}}(s)=\tau_{\mathbf{S^{'}}}(s)-\sum_{i \in \mathbf{S}}t_{i}^{min}-\tau_{j}(s)$. Then, the term $\rho_{\mathbf{S^{'}}}(s)<\rho_{\mathbf{S}}(s)$ is equivalent to $\tau_{\mathbf{S^{'}}}(s)<\tau_{\mathbf{S}}(s)+\tau_j(s)$, which means that the completion time of set $\mathbf{S}$ and user $j$ is minimized if and only if $\rho_{\mathbf{S^{'}}}(s)<\rho_{\mathbf{S}}(s)$.
\end{proof}

Lemma \ref{lemma_user_addition} suggests that at a particular decision time, if there is a zero penalty user that decreases the overall penalty, the optimal decision is to add this user in the concurrently transmitting set of users.

In the following, we present a penalty based scheduling algorithm (PSA) for the continuous rate transmission model based on the foregoing discussion, which is given in Algorithm \ref{alg:psa}. Let $L_{k^i}$ denote the set of users that are connected to the  same HAP as user $i$. The input of PSA is a set of users $\mathcal{N}$. The outputs are subsets of users transmitting simultaneously, transmit power of the users and the transmission time of each subset of users transmitting simultaneously. The algorithm starts by initializing the decision time $t_{dec}$ to $0$ and the transmission slot number $m$ to $1$ (Line $1$). In each iteration, PSA determines a subset of users $\mathbf{S^m}$, its corresponding time slot length and the powers of the users that transmit concurrently. The algorithm first determines the individual penalties of all the unallocated users at time $t_{dec}$ and sorts them in the increasing order of their penalties (Lines 3-4). The PSA stores these sorted users in a temporary array $\mathcal{N}_{sort}$. Then, the algorithm picks the user with minimum individual penalty as the first user for the simultaneous transmission and adds it to the set $\mathbf{S^m}$ followed by removing all the users from this cell, since only one user can transmit from a cell at one time (Lines $6-7$). Then, PSA checks whether the next lowest penalty user can be allocated with set $\mathbf{S^m}$ or not by comparing the penalty $\rho_{\mathbf{S^{'}}}(t_{dec})$ with $\rho_{\mathbf{S}}(t_{dec})$ as suggested by Lemma \ref{lemma_user_addition} and $\rho_{\mathbf{S^{'}}}(t_{dec})<0$, which means that concurrent transmission of set $\mathbf{S^{'}}$ is better than the separate transmission of set $\mathbf{S^m}$ and user $j$ (Lines $12$). The set $\mathbf{S^{'}}$ is the union of set $\mathbf{S}$ and the minimum penalty user of $\mathcal{N}_{sort}$. If both conditions hold, i.e., $\rho_{\mathbf{S^{'}}}(t_{dec})<\rho_{\mathbf{S^m}}(t_{dec})$ and $\rho_{\mathbf{S^{'}}}(t_{dec})<0$, the PSA adds this user to set $\mathbf{S^m}$ for simultaneous transmission and discards all the users from this cell (Line $13-14$), since only one user from a cell can transmit at a time. Otherwise, this user is not compatible with set $\mathbf{S^{m}}$ and as suggested by Lemma \ref{lemma_user_addition}, PSA discards it (Lines $16$). The PSA keeps on searching users for the simultaneous transmission until all the cells are visited (Lines $8-18$). Then, PSA evaluates the simultaneous transmission time and powers for set $\mathbf{S^m}$ and updates the set of unallocated users $\mathcal{N}$, decision time $t_{dec}$ and transmission slot number accordingly (Lines $19-22$). The algorithm terminates when all the users are allocated (Line $2$).  

The computational complexity of PSA is $\mathcal{O}(C_1N^2)$ for $N$ users, where $C_1$, the complexity of the power control algorithm given in Algorithm \ref{alg:fba}. The algorithm runs the power control algorithm to evaluate the penalties of the users in each iteration. Since the inner while loop discards one user in each iteration, there are $N$ executions at maximum, and the main while loop picks single user in each iteration for slot allocation, the total number of iterations is $N$. 

\begin{algorithm}[t]
	\caption{\small Penalty based Scheduling Algorithm (PSA)}\label{alg:psa}
	\begin{algorithmic}[1]	
	\small
	\Statex \textbf{Input}: A set of users $\mathcal{N}$
	\Statex \textbf{Output}: Simultaneously transmitting user in each slot $\mathbf{S^{m}}$, transmit power $\mathbf{P^m}$, transmission time for each slot $\tau_{\mathbf{S^m}}$, for $m \in \{1, \cdots, M\}$, 
\STATE  $t_{dec} \leftarrow 0$, $m \leftarrow 1$, 
\WHILE {$\mathcal{N} \neq \emptyset$}
\STATE determine $\rho_i(t_{dec}), i\in \mathcal{N}$
\STATE $\mathcal{N}_{sort} \leftarrow$ sorted users of $\mathcal{N}$ in increasing order of their penalties $\rho_i(t_{dec}), \forall i \in \mathcal{N}$,
\STATE $\mathbf{S^m} \leftarrow \emptyset$, 
\STATE $\mathbf{S^{m}} \leftarrow \mathbf{S^{m}}+\{\mathcal{N}_{sort}(1)\}$
\STATE $\mathcal{N}_{sort} \leftarrow \mathcal{N}_{sort}-L_{k^1}$
\WHILE {$\mathcal{N}_{sort} \neq \emptyset$}
\STATE $\mathbf{S^{'}} \leftarrow \emptyset$
\STATE $i \leftarrow \mathcal{N}_{sort}(1)$
\STATE $\mathbf{S^{'}} \leftarrow \mathbf{S^m}+\{i\}$
\IF {$\rho_{\mathbf{S^{'}}}(t_{dec})<\rho_{\mathbf{S^m}}(t_{dec})$ and $\rho_{\mathbf{S^{'}}}(t_{dec})<0$}
\STATE $\mathbf{S^{m}} \leftarrow \mathbf{S^{m}}+\{i\}$
%\STATE $\mathcal{N}^{c} \leftarrow$  $\mathcal{N}_i$ $\cap$ $\mathcal{N}_{temp}$ 
\STATE $\mathcal{N}_{sort} \leftarrow \mathcal{N}_{sort}-L_{k^i}$%N_ki is users in array fromcell of user i
\ELSE 
\STATE $\mathcal{N}_{sort} \leftarrow \mathcal{N}_{sort}-\{i\}$
\ENDIF
\ENDWHILE
\STATE evaluate $\tau_{\mathbf{S^m}}(t_{dec})$ and $\mathbf{P^m}$ for set $\mathbf{S^m}$ by using Algorithm \ref{alg:fba} 
\STATE $\mathcal{N} \leftarrow \mathcal{N}-\mathbf{S^{m}}$
\STATE $t_{dec} \leftarrow t_{dec}+\tau_{\mathbf{S^m}}(t_{dec})$
\STATE $m \leftarrow m+1$
\ENDWHILE
	\end{algorithmic}
\end{algorithm} 
 \section{Simulation Results} \label{sec:simulations}
The goal of this section is to evaluate the performance of the proposed scheduling algorithms in comparison to the multi-cell transmission with no scheduling and a previously proposed penalty based scheduling algorithm for successive transmission in a single cell network. The multi-cell transmission with no scheduling, denoted by MCNS, randomly selects a single user from each cell for the concurrent transmission by using the continuous rate within a transmission slot. The previously proposed penalty based scheduling algorithm in \cite{MLSP}, denoted by MPA, aims to minimize the schedule length of the users in a single cell full-duplex WPCN by scheduling the users based on their individual penalty. For our simulations, MPA picks the user with minimum penalty among all the cells and allocates the current time slot to this user. 

Simulation results are obtained by averaging $1000$ independent random network realizations. The cells are non-overlapping and distributed within a circle of radius $100$m. The users are uniformly distributed within each cell with radius $10$m. The attenuation of the links considering large-scale statistics are determined by using the path loss model given by 
$PL(d)=PL(d_0)+10\alpha log_{10}\bigg(\frac{d}{d_0}\bigg)+\emph{Z}$,
where $PL(d)$ is the path loss at distance $d$, $d_0$ is the reference distance, $\alpha$ is the path loss exponent, and $Z$ is a zero-mean Gaussian random variable with standard deviation $\sigma$. The small-scale fading has been modeled by using Rayleigh fading with scale parameter $\Omega$ set to mean power level obtained from the large-scale path loss model. The parameters used in the simulations are $D_i=100$ bits for $i \in \{1,\dots, N\}$; the constant rate $r=50$ Kbps; $W= 1$ MHz; $d_0=1$ m; $PL(d_0)=30$ dB; $\alpha=2.7$, $\sigma=4$ \cite{harvest_07, harvest_04,harvest_50}. The self interference coefficient $\beta$ is $-70$ dBm, the initial battery level of the users are $10^{-9}$ J, $P_{max}=0.1mW$ and $P_h=1W$ for the simulations, unless otherwise stated \cite{MLSP}.
 \begin{figure}[t]
 \centering
\includegraphics[width= 0.5 \linewidth]{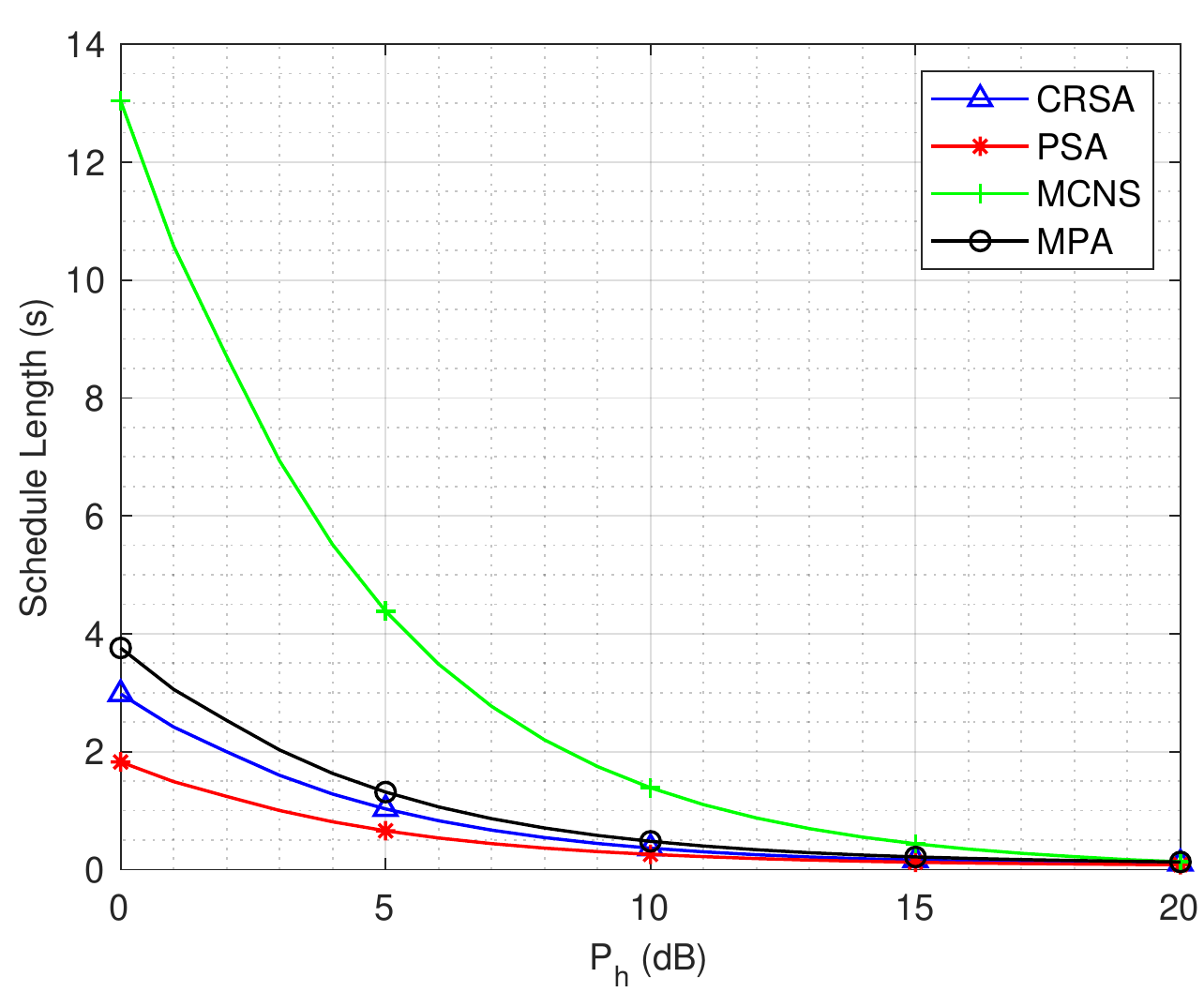}
\caption{Schedule length of the algorithms for different HAP transmit powers in a network of $10$ cells and $50$ users.} \label{Figure:Ph}
\end{figure}
Fig. \ref{Figure:Ph} shows the schedule length of the algorithms for different HAP transmit powers in a network of $10$ cells and $50$ users. The schedule length decreases as the HAP transmit power increases, since higher HAP power allows the users to harvest more energy, which allows users to reduce the waiting time and transmit information at higher rate. For lower values of HAP power, MCNS results in highest schedule length, since MCNS groups the users randomly without considering their penalties, energy harvesting rates and interference levels. As the transmission slot length should be equal for all the concurrently transmitting users, it is more probable that each group may contain a user with very low energy harvesting rate or severe interference, resulting in a higher slot length. For CRSA at lower $P_h$ values, due to low energy harvesting rate, the users need to wait longer to achieve the constant transmission rate, increasing the schedule length. However, proper scheduling of the users for concurrent transmission gives an advantage to CRSA over MCNS and MPA in terms of the schedule length. The MPA has higher schedule length than CRSA, since in MPA, only single user can transmit in each time slot, which results in a higher schedule length. However, proper scheduling of the users in each time slot provides and advantage to MPA over the MCNS. The proposed penalty based multi-cell scheduling algorithm, i.e., PSA, significantly outperforms the MCNS, MPA and CRSA for a practical range of $P_h$ values due to proper grouping, and then, scheduling of the concurrently transmitting users. For large values of the $P_h$, i.e., around $20dB$, all the algorithms perform similar, since users can quickly reach the $P_{max}$ value, which removes the necessity of scheduling.  
 
  \begin{figure}[t]
 \centering
\includegraphics[width= 0.5 \linewidth]{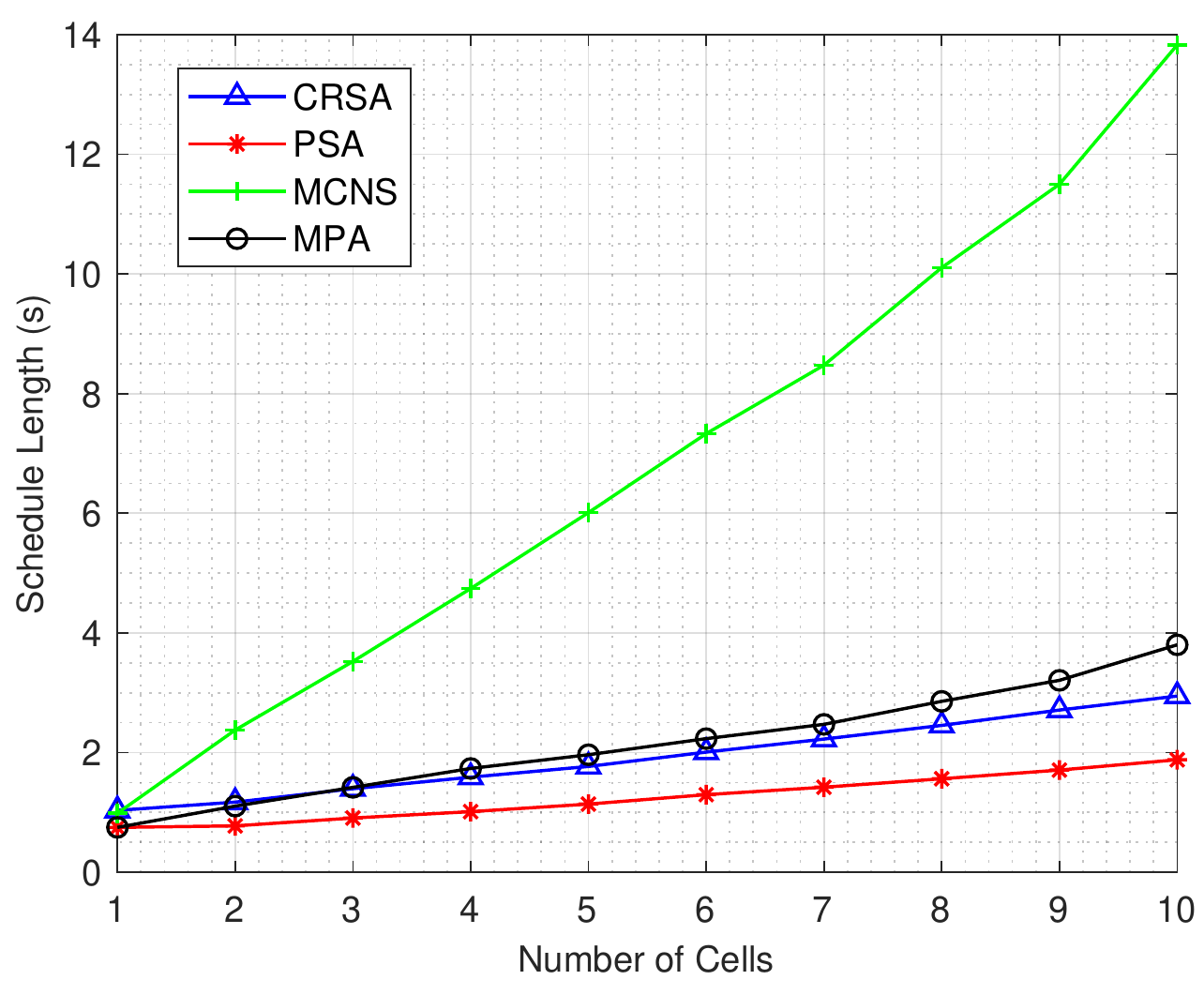}
\caption{Schedule length of the algorithms for different number of HAPs with $5$ users in each HAP.} \label{Figure:cells}
\end{figure}

Fig. \ref{Figure:cells} shows the performance of the algorithms for different number of cells in a network with $5$ users in each cell. The schedule length increases as the number of cells increases in the network, since the higher number of cells results in higher number of users, which requires more transmission combinations due to different interference and energy harvesting rates of the users. For a single cell network, MPA and PSA perform similar. This is due to fact that there is no concurrent transmissions in a single cell network and both algorithms aim to minimize the sum of penalties of the users by picking the user for each slot that has minimum penalty among the unscheduled users. On the other hand, due to constant rate of CRSA, users need to wait to achieve the desired SNR and no scheduling mechanism of MCNS results in higher schedule length. The proper scheduling of the users in CRSA and PSA reduces the schedule length in comparison to MCNS and MPA for a network containing a higher number of cells. For the MCNS, the steep increase in the schedule length is due to the concurrent allocation of higher number of unmatched users. For MCNS, CRSA and MPA algorithms, the addition of new cell results in almost constant increase, since the addition of new cell results in more users in the network, hence, more waiting time and additional interference for the concurrent transmission. On the other hand, for PSA, the addition of new cell results in a diminishing increase in the schedule length, since the chances of combining the similar interference characteristics users increases or a user from the new cell can be accommodated within the same transmission slot without any increase in the transmission slot length.   
  \begin{figure}[t]
 \centering
\includegraphics[width= 0.5 \linewidth]{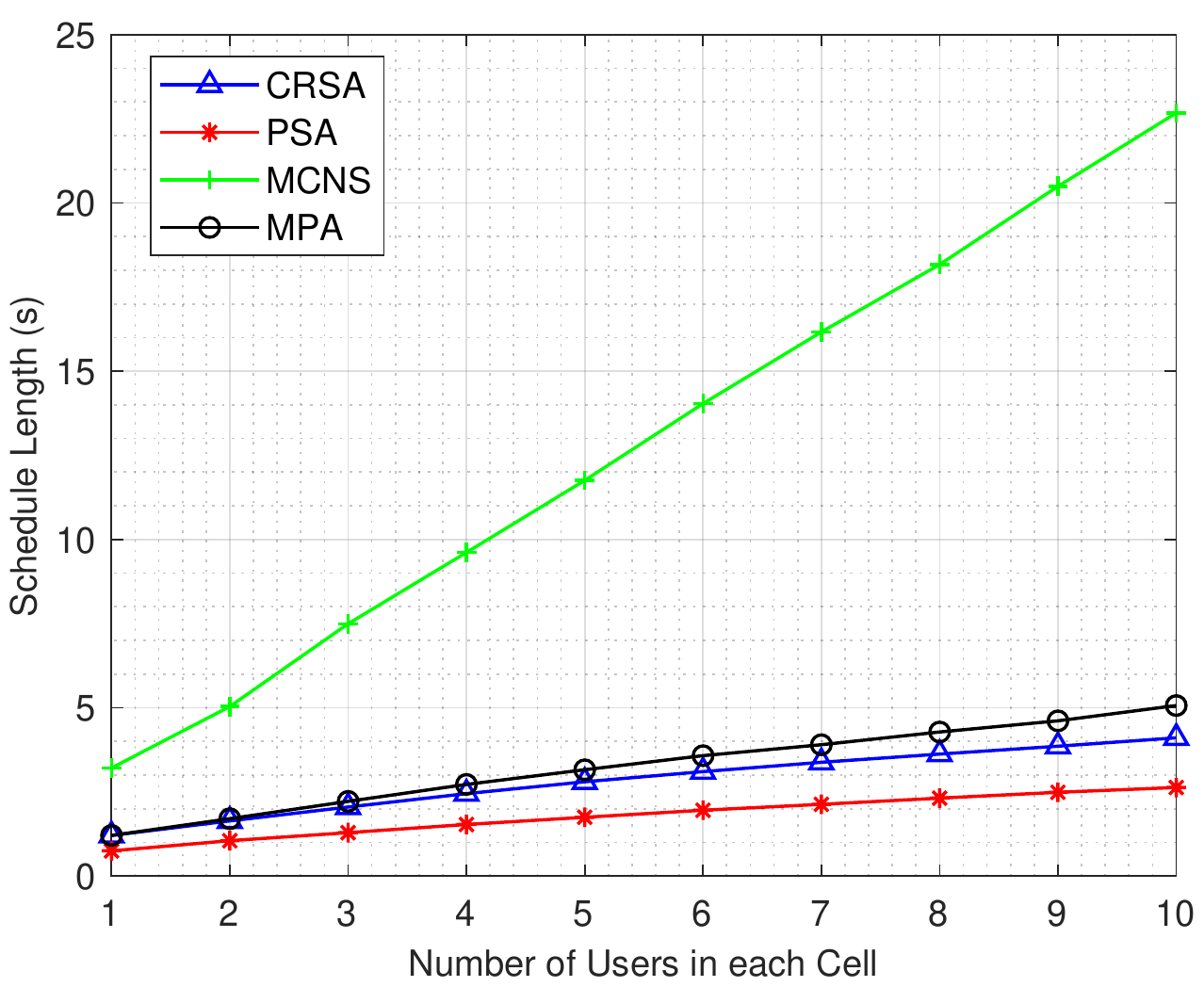}
\caption{Schedule length of the algorithms for different number of users in each cell in a network of $5$ HAPs.} \label{Figure:N}
\end{figure}

Fig. \ref{Figure:N} illustrates the impact of network size on the schedule length for the algorithms. The schedule length increases as the number of users increases within each cell, since this increase requires more transmission slots to complete the transmission of all users. The schedule lengths of CRSA, MPA and MCNS increase almost linearly as the number of users increases in a cell. On the other hand, the effect of each new user on the PSA algorithm is diminishing, since higher number of users within each cell, increases the probability of finding a more suitable user for the concurrent transmission, resulting in smaller schedule length. 
%  \begin{figure}[t]
% \centering
%\includegraphics[width= 0.6 \linewidth]{_SL_B}
%\caption{Schedule length of the algorithms for different initial battery levels in a network of $10$ cells and $50$ users.} \label{Figure:B}
%\end{figure}
%
%Fig. \ref{Figure:B} shows the comparison of the algorithms for different initial battery levels. For lower initial battery level, the impact of scheduling is more prominent since users need to harvest more energy to reach a constant rate level or in case of no scheduling, it is more probable to have a user which has very low energy harvesting rate hence higher schedule length. However, for PSA, the schedule length is lowest as compared to CRSA, MPA and MCNS, since PSA aims to group the users which minimizes the overall penalty hence the users with higher individual penalty can be delayed. This allows such users to harvest more energy and hence complete their transmission earlier. The advantage of the PSA over MPA is due to the concurrent transmission. However, after a certain battery level, all the algorithms perform close to each other, since higher battery level eliminates the waiting time and users can be grouped easily due to their ability to afford higher transmit power. 

\section{Conclusion} \label{sec:conclusion}

In this paper, we consider a full-duplex multi-cell wireless powered communication network, in which multiple hybrid access points transfer RF energy and users harvest this energy to transmit their information. The transmission frame is divided into multiple transmission slots of variable length and in each transmission slot, multiple users from different cells transmit their information concurrently. We investigate the minimum length scheduling problem to determine the power control and scheduling while considering the traffic demand, maximum transmit power and energy causality of the users for constant and continuous rate transmission models. We formulate two mixed integer programming problems, which are difficult to solve for the global optimal solution. As a solution strategy, we decompose the optimization problems into power control and scheduling problems. First, we solve the power control problems based on the evaluation of Perron-Ferobenius conditions and bisection method based algorithm for constant and continuous rate models, respectively. Then, the proposed optimal power control solutions are used to determine the optimal transmission time for a subset of users that will be scheduled by the scheduling algorithms. For the scheduling problem of constant rate model, we propose a heuristic algorithm based on the maximization of the number of concurrently transmitting users by maximizing the allowed interference on each user. For the scheduling problem of continuous rate model, first, we define a penalty function as a metric representing the advantage of concurrent transmission over individual transmission of those users. Then, we show that the schedule length minimization is equivalent to the minimization of the sum of penalties. By using this definition, we propose an algorithm aiming at minimizing the sum of penalties over the schedule. Through extensive simulations, we demonstrate that the proposed solutions outperform the conventional successive transmission and concurrent transmission of randomly selected users for different HAP transmit powers, network densities and network size.    

\bibliography{bib_shahid}
\bibliographystyle{ieeetr}
\end{document}